\newcommand{\R}{\mathbb{R}}
\newcommand{\N}{\mathbb{N}}
\renewcommand{\P}{\mathbb{P}}
\newcommand{\1}{\mathds{1}}
\theoremstyle{plain}
\newtheorem{proposition}{Proposition}
\theoremstyle{definition}
\newtheorem{definition}{Definition}[section]
\DeclareMathOperator*{\argmax}{arg\,max}
\newcolumntype{L}[1]{>{\raggedright\let\newline\\\arraybackslash\hspace{0pt}}m{#1}}
\newcolumntype{C}[1]{>{\centering\let\newline\\\arraybackslash\hspace{0pt}}m{#1}}
\newcolumntype{R}[1]{>{\raggedleft\let\newline\\\arraybackslash\hspace{0pt}}m{#1}}
\begin{document}
\title{The Generalized Cascade Click Model: A Unified Framework for Estimating Click Models}



\makeatletter
\renewcommand\@date{{%
  \vspace{-\baselineskip}%
  \large\centering
  \begin{tabular}{@{}c@{}}
    Corn\'{e} de Ruijt \textsuperscript{1} \\
    \normalsize c.a.m.de.ruijt@vu.nl
  \end{tabular}%
  \quad
    \begin{tabular}{@{}c@{}}
    Sandjai Bhulai \textsuperscript{1} \\
    \normalsize s.bhulai@vu.nl
  \end{tabular}

  \bigskip

  \textsuperscript{1}Faculty of Science\\ Vrije Universiteit Amsterdam\\ Amsterdam, the Netherlands \par

  \bigskip

  \today
}}
\makeatother

\maketitle

\begin{abstract}
Given the vital importance of search engines to find digital information, there has been much scientific attention on how users interact with search engines, and how such behavior can be modeled. Many models on user - search engine interaction, which in the literature are known as click models, come in the form of Dynamic Bayesian Networks. Although many authors have used the resemblance between the different click models to derive estimation procedures for these models, in particular in the form of expectation maximization (EM), still this commonly requires considerable work, in particular when it comes to deriving the E-step. What we propose in this paper, is that this derivation is commonly unnecessary: many existing click models can in fact, under certain assumptions, be optimized as they were Input-Output Hidden Markov Models (IO-HMMs), for which the forward-backward equations immediately provide this E-step. To arrive at that conclusion, we will present the Generalized Cascade Model (GCM) and show how this model can be estimated using the IO-HMM EM framework, and provide two examples of how existing click models can be mapped to GCM. Our GCM approach to estimating click models has also been implemented in the \texttt{gecasmo} Python package.
\end{abstract}


\section{Introduction}

In the last decade, a considerable number of models have been proposed to explain and/or predict web user interaction with search engines \cite{chuklin2015click}. Crucial to many of these so-called click models, is to provide an explanation to the \textit{position bias}: web users tend to be more likely to click on items at the top of a Search Engine Result Page (SERP). Many of these click models are probabilistic graphical models (PGMs), in particular in the form of Dynamic Bayesian Networks (DBNs) \cite{murphy2002dynamic, chuklin2015click}. The choice of using DBNs to model user interaction followed from observations in early eye-tracking studies, which suggest that web users on average evaluate the items in the (SERP) sequentially in a top-down fashion \cite{joachims2005accurately}, and the work of \citet{craswell2008experimental}, who showed that a sequential model, the so-called \textit{cascade model}, outperformed other models that try to explain the position bias.

As the parameters of more complex click models do not have a closed-form maximum likelihood estimator per se, estimation is commonly done via expectation maximization (EM). To our knowledge, most authors derived the E-step manually for their proposed click model in a similar fashion as suggested by \cite[Ch. 4]{chuklin2015click}, i.e., by explicitly writing out the expectation. However, as we will argue in this paper, such a derivation is commonly unnecessary. Many click models can be rewritten as an Input-Output Hidden Markov Model (IO-HMM) \cite{bengio1995input}, for which the solution to the EM algorithm is equivalent to the one obtained from deriving the EM algorithm explicitly for the single click model of interest. 

But perhaps more important, if the model can be rewritten as IO-HMM, the IO-HMM directly provides an explicit expression for the E-step, making use of the forward-backward algorithm, thereby making a separately derivation of the E-step for each click model unnecessary. In other words, if we can model a click model as IO-HMMs, we directly obtain an explicit EM estimation procedure for the parameters of the click model, without having to derive further expressions required by the EM algorithm.

This paper contributes to the current click model literature in two ways. First, we introduce a rather general click model, which we name the Generalized Cascade Model (GCM), and we show that the EM algorithm of GCM and the corresponding IO-HMM have the same solution. Furthermore, we show that GCM is a generalization of many commonly used click models, including the User Browser Model (UBM) and the Chapelle-Zang Model (CZM). Second, as an example, we provide explicit mappings from UBM and CZM to GCM, mappings which can be used in the \verb|gecasmo| package to estimate the parameters of these models. The package fully utilizes the solution equivalence between GCM and IO-HMM, and therefore only requires these mappings. I.e., contrary to existing click model estimation software using EM, \verb|gecasmo| does not require programming the E-step or M-step explicitly for each click model.

This paper is structured as follows. Section \ref{sec:relwork} gives an short overview of the click model literature and the methods used to estimate click models. Section \ref{sec:GCMtoIOHMM} presents the GCM and shows how the EM algorithm of IO-HMMs can be used to estimate GCMs, and uses some properties of GCMs to simplify this estimation. Section \ref{sec:package} shows that UBM and CZM can be mapped to GCM, and provides an example of such a mapping for both click models, which can be used in the \verb|gecasmo| package. Lastly, Section \ref{sec:conclusion} gives a brief conclusion and ideas for further research.

\section{Related work}
\label{sec:relwork}
\paragraph{Probabilistic graphical click models}
As already briefly discussed, a considerable number of click models have been proposed in the literature in the form of Dynamic Bayesian Networks (DBNs) \cite{chuklin2015click}, many inspired by the early work on these type of models by \citet{craswell2008experimental}. Since the work of Crasswell et al., many alternative DBNs have been proposed, in particular to cope with the two main limitations of the cascade model: 1) it cannot model more than one click in a SERP, 2) if no item is clicked, it assumes the web user evaluated all items. 

The two perhaps most well known alternatives to the cascade model are the \textit{user browser model} (UBM) \cite{dupret2008user}, and the \textit{Chapelle-Zhang model} (CZM) \cite{chapelle2009dynamic}, which we will discuss in more depth in Section \ref{subsec:clickmodeltoGCM}. From UBM and CZM, many alternative click models have been derived  \cite{chuklin2015click}[Ch. 8]. These include the partially sequential click model (PSCM), and its generalization, the time-aware click model (TACM) \cite{wang2015incorporating,liu2016time}. Both models can be viewed as generalizations of UBM, in which the depth-first search assumption is replaced by a locally unidirectional examination assumption. 

Unfortunately, CZM is in the literature more commonly known as `the dynamic Bayesian network' model. However, as the model is a strict subset of all DBNs, and to avoid confusion in our terminology, we deliberately renamed the model to CZM in this paper. We do like to stress that the authors of the paper speak correctly of `a dynamic Bayesian network'; the name `the dynamic Bayesian network' seems to have followed from other authors referring to the paper. 


\paragraph{Distributed approach to click models}
More recently, recurrent neural networks (RNNs) have grown in popularity as alternatives to DBNs \cite{borisov2016neural, borisov2018click, deng2018ad}, an approach which \citet{borisov2016neural} name the \textit{distributed approach} to click modeling. The work of \citet{borisov2016neural} is in particular interesting, as it shows that these RNNs are able to  outperform both UBM and CZM in terms of click perplexity, though they come at the cost of increased model complexity. 

The latter may be problematic if one is interested in studying the optimal order of items in a SERP under a certain objective function, as was done by \citet{balakrishnan2008optimal}. Given the large number of parameters in these RNNs and their interactions, such analysis would quickly become intractable. Furthermore, although \citet{borisov2016neural} attempt to explain the features learned by the RNN using t-SNE \cite{maaten2008visualizing}, this analysis is post hoc. Hence, when drawing conclusion based on such analysis, one would be at risk of the narrative fallacy. We therefore argue that, even though a distributed approach may lead to better predictions, DBNs are still valuable in explaining web user behavior and optimizing search engines accordingly.

\paragraph{Estimation of PGMs}
Click models which have no closed form maximum likelihood estimate are most commonly estimated using expectation maximization (EM) \cite{chuklin2015click}[Ch. 4]. EM has been implemented for some so-called `basic click models'\cite{chuklin2015click}[Ch. 3], which include the earlier mentioned UBM and CZM, in a number of open source software packages. These include the \verb|ClickModels| \cite{chucklin2020}, and \verb|PyClick|\cite{chucklin2020b} projects.  

Even though we focus on a typical textbook implementation of EM (e.g., \cite{bishop2006pattern}[pp. 438-439]), other authors also consider alterations to EM, or alternatives to EM, in the context of click models. Especially when the number of parameters in the model grows large, alternatives to the classic EM implementation become more interesting to consider. 

In one of the more complex click models, the \textit{task-centric model}, \citet{zhang2011user} make some additional independence assumptions between latent variables in order to be able to use EM, where these assumptions have little impact on the likelihood distribution, and propose alternative updates to speed up convergence. \citet{wang2013content} allow for covariates in the model, while still relying on EM for estimation. They propose a posterior regularized EM algorithm for click models to cope with noisy clicks and mis-ordered item pairs. Apart from EM, \citet{zhu2010novel} propose a click model named the \textit{general click model}, which is solved using \textit{expectation propagation}. \citet{zhang2010learning} a use probit Bayesian inference approach, which can be applied to estimate various click models. Both \cite{zhu2010novel} and \cite{zhang2010learning} allow for covariates in the model.

\section{On the relationship between GCM and IO-HMM}
\label{sec:GCMtoIOHMM}
\subsection{A brief recap of IO-HMM}
We start by a brief recap of the Input-Output Hidden Markov Model (IO-HMM), as introduced by  \citet{bengio1995input}. We will use $\mathcal{P}(X)$ to denote the set of all parents of some node $X$ in a Dynamic Bayesian Network. 
\begin{definition}
An Input-Output Hidden Markov Model (IO-HMM) is a Dynamic Bayesian Network consisting of observed states $\{\mathbf{x}_t, \mathbf{y}_t\}_{t=1,\hdots, T}$, and latent states $\{z_t\}_{t=1,\hdots,T}$. For $t>1$ we have: $\mathcal{P}(z_t)=\{z_{t-1}, \mathbf{x}_{t}\}$, whereas $\mathcal{P}(z_1)=\{\mathbf{x}_{t}\}$. Furthermore, for all $t$ we have: $\mathcal{P}(\mathbf{y}_t)=\{\mathbf{x}_t, z_t\}$, and $\mathcal{P}(\mathbf{x}_t)=\emptyset$. Here $\mathbf{x}_{t}\in\mathbb{R}^{R_1}$,  $z_{t}\in\{1,\hdots,K\}$, and $\mathbf{y}_{t}\in\mathbb{R}^{R_2}$; $R_1, R_2\in\mathbb{N}$. The transition probabilities are given by 
\begin{equation}
    \varphi_{k^{'}k,t}(\Omega):=\P(z_t=k|z_{t-1}=k^{'}; \mathbf{x}_t, \Omega),
\end{equation}
whereas the probability of being in state $k$ at time $t$ given $\mathbf{x}_{1:t}$ is given by
\begin{equation}
\label{eq:defzeta}
    \zeta_{k,t}(\Omega)=\P(z_t=k|\mathbf{x}_{1:t}, \Omega),
\end{equation}
with $\mathbf{x}_{t^{'}:t}=(\mathbf{x}_{t^{'}},\hdots, \mathbf{x}_t)$, $t^{'}<t$, and $\Omega=\{\vartheta_1,\hdots, \vartheta_P\}$, $\vartheta_p\in(0,1)$, being a set of variables. The emission probability for $\mathbf{y}_t$, given current state $z_t$ and covariate vector $\mathbf{x}_t$, is given by some density function $f_y(\mathbf{y}_t|z_t;\mathbf{x}_t,\Omega)$. A graphic representation of the IO-HMM is given by Figure \ref{fig:IOHMM}.
\begin{figure}
\centering
\begin{tikzpicture}[scale=0.75]

\node at (0,5) {$z_{1}$};
\draw (0,5) circle [radius=0.5];

\node at (0,2) {$\mathbf{x}_{1}$};
\filldraw[fill=gray,opacity=0.5] (0,2) circle (0.5);

\node at (2.5,3.5) {$\mathbf{y}_{1}$};
\filldraw[fill=gray,opacity=0.5] (2.5,3.5) circle (0.5);

\draw [->,thick] (0.5,5) -- (2,3.75);
\draw [->,thick] (0.5,2) -- (2,3.25);
\draw [->,thick] (0,2.5) -- (0,4.5);
\draw [->,thick] (0.5,5) -- (3.5,5);

\node at (4,5) {$z_{2}$};
\draw (4,5) circle [radius=0.5];

\node at (4,2) {$\mathbf{x}_{2}$};
\filldraw[fill=gray,opacity=0.5] (4,2) circle (0.5);

\node at (6.5,3.5) {$\mathbf{y}_{2}$};
\filldraw[fill=gray,opacity=0.5] (6.5,3.5) circle (0.5);

\draw [->,thick] (4.5,5) -- (6,3.75);
\draw [->,thick] (4.5,2) -- (6,3.25);
\draw [->,thick] (4,2.5) -- (4,4.5);
\draw [->,thick] (4.5,5) -- (7.5,5);

\filldraw (8,5) circle (0.05);
\filldraw (8.25,5) circle (0.05);
\filldraw (8.5,5) circle (0.05);

\draw [->,thick] (9, 5) -- (9.5, 5);

\node at (10,5) {$z_{T}$};
\draw (10,5) circle [radius=0.5];

\node at (10,2) {$\mathbf{x}_{T}$};
\filldraw[fill=gray,opacity=0.5] (10,2) circle (0.5);

\node at (12.5,3.5) {$\mathbf{y}_{T}$};
\filldraw[fill=gray,opacity=0.5] (12.5,3.5) circle (0.5);

\draw [->,thick] (10.5,5) -- (12,3.75);
\draw [->,thick] (10.5,2) -- (12,3.25);
\draw [->,thick] (10,2.5) -- (10,4.5);















\end{tikzpicture}
\caption{IO-HMM}
\label{fig:IOHMM}
\end{figure}
\end{definition}

Now let us consider IO-HMM under the lens of expectation maximization (EM). Let $i\in\{1,\hdots,n\}$ be the index of some realization of $\{\mathbf{x}_t, \mathbf{y}_t, z_t\}_{t=1,\hdots,T}$. For brevity, we write $\mathbf{z}^{(i)}_{t^{'}:t}=(z_{t^{'}}^{(i)},\hdots, z_{t}^{(i)})$, $t^{'}<t$ (idem for $\mathbf{y}^{(i)}_{t^{'}:t}$ and $\mathbf{x}^{(i)}_{t^{'}:t}$), and $\mathbf{Z}=(\mathbf{z}_{1:T}^{(1)},\hdots, \mathbf{z}_{1:T}^{(n)})$ (idem for $\mathbf{Y}$ and $\mathbf{X}$). We are interested in maximizing the likelihood function
\begin{equation}
\label{eq:bnlikelihood}
\begin{aligned}
    \mathcal{L}(\Omega; \mathbf{X}, \mathbf{Y})&=\mathbb{P}(\mathbf{Y}|\mathbf{X}, \Omega)\\
    &=\sum_{\mathbf{Z}}\mathbb{P}(\mathbf{Y}, \mathbf{Z}|\mathbf{X}, \Omega),
\end{aligned}
\end{equation}
with respect to $\Omega$. Since it is usually not possible to maximize this function directly, EM rather tries to maximize
\begin{equation}
\label{eq:defQ}
Q(\Omega; \hat{\Omega})=\mathbb{E}_{\mathbf{Z}}\left[l_c(\Omega;\mathbf{Z}, \mathbf{Y}, \mathbf{X})|\mathbf{Y}, \mathbf{X}, \hat{\Omega}\right],
\end{equation}
with respect to $\Omega$, keeping some current estimates $\hat{\Omega}$ constant. Here $l_c(\Omega;\mathbf{Z}, \mathbf{Y}, \mathbf{X})$ is the complete data log-likelihood
\begin{equation}
l_c(\Omega;\mathbf{Z}, \mathbf{Y}, \mathbf{X})=\sum_{i=1}^{n}\log\mathbb{P}(\mathbf{y}_{1:T}^{(i)}, \mathbf{z}_{1:T}^{(i)}|\mathbf{x}_{1:T}^{(i)};\Omega).
\end{equation}
Once some (possibly local) optimum $\Omega^{*}$ has been found, the current estimates are updated: $\hat{\Omega}\leftarrow \Omega^{*}$. This process continues until convergence.

In the case of IO-HMM, $Q(\Omega; \hat{\Omega})$ can be written as 
\begin{equation}
\label{eq:IOHMMESS}
Q(\Omega; \hat{\Omega})=\sum_{i=1}^{n}\sum_{t=1}^{T}\sum_{k=1}^{K}\left(\hat{\zeta}_{k,t}^{(i)}\log f_y(\mathbf{y}_t^{(i)}|z_t^{(i)},\mathbf{x}_t^{(i)}; \Omega)+\sum_{k^{'}=1}^{K}\hat{h}_{k^{'}k,t}^{(i)}\log\varphi_{k^{'}k,t}^{(i)}(\Omega)\right),
\end{equation}
with 
\begin{equation}
\hat{h}_{k^{'}k,t}=\mathbb{E}\left[a_{k,t}^{(i)}a_{k^{'},t-1}^{(i)}|\mathbf{x}_{1:T}^{(i)}, \mathbf{y}_{1:T}^{(i)}, \hat{\Omega}\right],
\end{equation}
\begin{equation}
a_{k,t}^{(i)}=\left\lbrace
\begin{array}{ll}
    1 & \text{if } z_{t}^{(i)} = k \\
    0 & \text{otherwise}
\end{array}\right.,
\end{equation}
and $\hat{\zeta}_{k,t}$, being defined as in \eqref{eq:defzeta}, but conditional on current estimates $\hat{\Omega}$.

\subsection{The generalized cascade model (GCM) and its resemblance to IO-HMM}
\label{subsubsec:GCMtoIOHMM}
As mentioned by \citet{chapelle2009dynamic}, many click models show considerable resemblance with hidden Markov models. However, to our knowledge, this resemblance has so far not been made explicit. Neither has previous literature taken advantage of this resemblance. Which is unfortunate, as we do believe there is a considerable advantage to such an approach. Writing out the updates for the parameters $\Omega$ in a specific click model can be a tedious and error prone job, as is somewhat illustrated in \cite[Ch. 4]{chuklin2015click}. Furthermore, this approach only makes partial use of the resemblance between the different click models to quickly find an estimation procedure for $\Omega$. Using the IO-HMM, we are (only) required to define the latent state space, and transition probabilities $\varphi_{k^{'}k,t}^{(i)}(\Omega)$ corresponding to a particular click model, after which we can use the IO-HMM machinery to evaluate $Q(\Omega,\hat{\Omega})$.

To show that we can indeed use IO-HMM to estimate click models, we will first provide a definition for click models having a cascade effect, that is, where the probability of a user clicking an item at position $t$ in a search engine result page (SERP) depends on clicks/skips on items before $t$, but not after $t$. First, we shall introduce some extra notation. We now let $i$ represent a single query session on a search engine. We define a query session as a single realization of the variables $\{\mathbf{x}_t, \bm{\psi}_t, y_t\}_{t=1,\hdots,T}$, with $\bm{\psi}_t=(\psi_{t,1}, \hdots, \psi_{t,P})$, $\psi_{t,p}\in\{0,1\}$, and $\P(\psi_{t,p}=1|\mathcal{P}(\psi_{t,p}); \mathbf{x}_t, \Omega)=\vartheta_{t,p}$. $\mathbf{x}_t$ has the same interpretation as before, while $y_t\in\{0,1\}$, with $y_t=1$ if item at position $t$ is clicked, and $y_t=0$ if the $t$-th item is not clicked (i.e., skipped). We assume there is exactly one configuration of $\bm{\psi_{t}}$ which leads to a positive probability of $y_t=1$. We call this state the `click state', and denote it by $\mathcal{C}_t$. I.e., we allow which state is the click state to depend on $t$.  

Each query session results in a SERP consisting of items $\mathcal{S}_i\subset \mathcal{V}$, with $\mathcal{V}=\{1,\hdots, V\}$ the set of possible items the search engine may return. For simplicity, we will assume each list $\mathcal{S}_i$ has the same number of items $T$, and each item in $\mathcal{S}_i$ occupies an unique position $t\in\{1,\hdots, T\}$. The item in position $t$ in list $\mathcal{S}_i$ is given by the bijective function $r_i(t)$. Since, like in IO-HMM, we assume realizations $i=\{1,\hdots,n\}$ to be conditionally independent given $\mathbf{x}_{1:T}^{(i)}$, we will briefly omit index $i$ in Definition \ref{def:GCM}.

\begin{definition}
\label{def:GCM}
A Generalized Cascade Model (GCM) is a Dynamic Bayesian Network consisting of observed states $\{\mathbf{x}_t, y_t\}_{t=1,\hdots, T}$, and latent states $\{\bm{\psi}_t\}_{t=1,\hdots,T}$, for which for $t>1$: $\mathcal{P}(\bm{\psi}_t)=\{\bm{\psi}_{t-1}, y_{t-1},  \mathbf{x}_{t}\}$; $\mathcal{P}(\bm{\psi}_1)=\{\mathbf{x}_{t}\}$, and for all $t$: $\mathcal{P}(y_t)=\{\mathbf{x}_t, \bm{\psi}_t\}$, and $\mathcal{P}(\mathbf{x}_t)=\emptyset$. 
\end{definition}
\begin{figure}
\centering
\begin{tikzpicture}[scale=0.75]

\node at (0,5) {$\bm{\psi}_{1}$};
\draw (0,5) circle [radius=0.5];

\node at (0,2) {$\mathbf{x}_{1}$};
\filldraw[fill=gray,opacity=0.5] (0,2) circle (0.5);

\node at (2.5,3.5) {$y_{1}$};
\filldraw[fill=gray,opacity=0.5] (2.5,3.5) circle (0.5);

\draw [->,thick] (0.5,5) -- (2,3.75);
\draw [->,thick] (0.5,2) -- (2,3.25);
\draw [->,thick] (0,2.5) -- (0,4.5);
\draw [->,thick] (0.5,5) -- (3.5,5);
\draw [->,thick] (3,3.5) -- (3.75,4.5);

\node at (4,5) {$\bm{\psi}_{2}$};
\draw (4,5) circle [radius=0.5];

\node at (4,2) {$\mathbf{x}_{2}$};
\filldraw[fill=gray,opacity=0.5] (4,2) circle (0.5);

\node at (6.5,3.5) {$y_{2}$};
\filldraw[fill=gray,opacity=0.5] (6.5,3.5) circle (0.5);

\draw [->,thick] (4.5,5) -- (6,3.75);
\draw [->,thick] (4.5,2) -- (6,3.25);
\draw [->,thick] (4,2.5) -- (4,4.5);
\draw [->,thick] (4.5,5) -- (7.5,5);
\draw [->,thick] (7,3.5) -- (8,4.5);

\filldraw (8,5) circle (0.05);
\filldraw (8.25,5) circle (0.05);
\filldraw (8.5,5) circle (0.05);

\draw [->,thick] (9, 5) -- (9.5, 5);
\draw [->,thick] (9, 3.5) -- (9.75, 4.5);

\node at (10,5) {$\bm{\psi}_{T}$};
\draw (10,5) circle [radius=0.5];

\node at (10,2) {$\mathbf{x}_{T}$};
\filldraw[fill=gray,opacity=0.5] (10,2) circle (0.5);

\node at (12.5,3.5) {$y_{T}$};
\filldraw[fill=gray,opacity=0.5] (12.5,3.5) circle (0.5);

\draw [->,thick] (10.5,5) -- (12,3.75);
\draw [->,thick] (10.5,2) -- (12,3.25);
\draw [->,thick] (10,2.5) -- (10,4.5);















\end{tikzpicture}
\caption{GCM}
\label{fig:GCM}
\end{figure}

A graphical representation of GCM is given in Figure \ref{fig:GCM}. The similarity between GCM and IO-HMM should become quickly apparent: in GCM, transition probabilities also depend on $y_{t-1}$, which is not the case for IO-HMM. On the other hand, GCM assumes $y_t$ to be binary, instead of in $\R^{R_2}$, and $\bm{\psi}_t$ is possibly multi-dimensional. However,
 as Proposition \ref{prop:GCMtoIOHMM} shows, for optimization by EM these distinctions do not matter. 
 
\begin{proposition}
\label{prop:GCMtoIOHMM}
$Q_{\text{GCM}}(\Omega,\hat{\Omega})$ can be written in the form of Eq. \eqref{eq:IOHMMESS}.
\end{proposition}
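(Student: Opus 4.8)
The plan is to exhibit an explicit correspondence between the GCM of Definition~\ref{def:GCM} and the IO-HMM, and then simply read off the form of $Q_{\text{GCM}}$ from the IO-HMM expected complete-data log-likelihood \eqref{eq:IOHMMESS}. There are three apparent mismatches to dispose of: (i) the latent state $\bm{\psi}_t$ is a binary vector rather than a scalar in $\{1,\hdots,K\}$; (ii) the output $y_t$ is binary rather than valued in $\R^{R_2}$; and (iii) the GCM transition into $\bm{\psi}_t$ conditions on the previous output $y_{t-1}$, whereas the IO-HMM transition does not. I will argue that (i) and (ii) are trivial reparametrisations, and that (iii) is handled by absorbing the observed $y_{t-1}$ into the input.

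For (i), I would fix a bijection between the $2^P$ configurations of $\bm{\psi}_t$ and the label set $\{1,\hdots,K\}$ with $K=2^P$, and let $z_t$ denote the label of $\bm{\psi}_t$ under this bijection; the indicators $a_{k,t}^{(i)}$ and hence the quantities $\hat{\zeta}_{k,t}^{(i)}$ and $\hat{h}_{k'k,t}^{(i)}$ then carry over verbatim. For (ii), I would take the emission density to be the probability mass function $f_y(y_t|z_t;\mathbf{x}_t,\Omega)=\P(y_t|\bm{\psi}_t,\mathbf{x}_t,\Omega)$, which is the discrete $R_2=1$ special case already permitted by the definition.

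The crux is (iii). First I would write the GCM complete-data log-likelihood by factorising the joint along the DBN of Definition~\ref{def:GCM},
\[
\begin{aligned}
\log\P(y_{1:T}^{(i)},\bm{\psi}_{1:T}^{(i)}|\mathbf{x}_{1:T}^{(i)};\Omega)
&=\log\P(\bm{\psi}_1^{(i)}|\mathbf{x}_1^{(i)};\Omega)+\sum_{t}\log\P(y_t^{(i)}|\bm{\psi}_t^{(i)},\mathbf{x}_t^{(i)};\Omega)\\
&\quad+\sum_{t\geq2}\log\P(\bm{\psi}_t^{(i)}|\bm{\psi}_{t-1}^{(i)},y_{t-1}^{(i)},\mathbf{x}_t^{(i)};\Omega),
\end{aligned}
\]
and then apply $\mathbb{E}[\,\cdot\,|\mathbf{Y},\mathbf{X},\hat{\Omega}]$. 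The key observation is that this expectation conditions on the \emph{entire} output sequence $\mathbf{Y}$, so every $y_{t-1}^{(i)}$ appearing in a transition factor is a fixed constant. I would therefore define the IO-HMM transition by absorbing this constant into the time-and-realisation index,
\[
\varphi_{k'k,t}^{(i)}(\Omega):=\P\big(z_t^{(i)}=k|z_{t-1}^{(i)}=k',\mathbf{x}_t^{(i)},y_{t-1}^{(i)};\Omega\big),
\]
equivalently augmenting the input to $\tilde{\mathbf{x}}_t^{(i)}=(\mathbf{x}_t^{(i)},y_{t-1}^{(i)})$; this is legitimate precisely because the IO-HMM transition is already allowed to depend arbitrarily on $(i,t)$ through $\mathbf{x}_t^{(i)}$, so conditioning on the extra constant introduces no genuinely new dependence. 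Expanding the two expectations by linearity — the emission term over the single latent $z_t$ and the transition term over the pair $(z_{t-1},z_t)$ — then yields exactly $\sum_k\hat{\zeta}_{k,t}^{(i)}\log f_y(\cdots)$ and $\sum_{k'}\sum_k\hat{h}_{k'k,t}^{(i)}\log\varphi_{k'k,t}^{(i)}(\Omega)$, which is the summand of \eqref{eq:IOHMMESS}.

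I expect the main obstacle to be justifying the input-augmentation in (iii) rigorously: one must check that viewing the observed click $y_{t-1}$ as an exogenous input, rather than as a generated output that feeds forward into $\bm{\psi}_t$, does not alter the posterior over the latent path and hence the E-step quantities $\hat{\zeta}$ and $\hat{h}$. The resolution is that, conditional on the full observation $(\mathbf{X},\mathbf{Y})$, the complete-data likelihood factorises identically in the two views — the augmented IO-HMM likelihood $\prod_t\P(z_t|z_{t-1},\tilde{\mathbf{x}}_t)\P(y_t|z_t,\tilde{\mathbf{x}}_t)$ collapses term-by-term to the GCM likelihood displayed above — so the conditional law $\P(\bm{\psi}_{1:T}|\mathbf{X},\mathbf{Y})$ and every expectation taken under it coincide, and the IO-HMM forward-backward equations compute the correct $\hat{\zeta}$ and $\hat{h}$. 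Note that the ``single click state'' assumption of the GCM is not needed for this structural equivalence; it only enters later, when the resulting E-step is simplified.
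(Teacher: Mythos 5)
Your proposal is correct and follows essentially the same route as the paper's proof: encode the binary vector $\bm{\psi}_t$ as a scalar state via a bijection onto $\{1,\hdots,K\}$ with $K=2^P$, expand $Q_{\text{GCM}}$ as the conditional expectation of the factorised complete-data log-likelihood, and absorb the observed $y_{t-1}^{(i)}$ into an augmented input $\mathbf{x}_t^{(i)'}=(\mathbf{x}_t^{(i)},y_{t-1}^{(i)})$ so that the transition and emission terms take the form of Eq.~\eqref{eq:IOHMMESS}. Your additional check that conditioning on the full $(\mathbf{X},\mathbf{Y})$ makes the two views of $y_{t-1}$ (generated output versus exogenous input) yield the same posterior over latent paths, and your remark that the single-click-state assumption is not needed here, are both consistent with the paper, which defers that assumption to Proposition~\ref{prop:noemmision}.
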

\begin{proof}
We will first convert the multi-dimensional state space into a single-dimensional one, which can easily be done as all we assumed all latent state variables are binary. E.g., we may use the transformation 
\begin{equation}
\label{eq:bincompress}
z_{t}^{(i)}=\text{bin}(\bm{\psi}_t^{(i)})=\sum_{p=1}^{P}2^{p-1}\psi_{t,p}^{(i)}+1,
\end{equation}
to obtain the single-dimensional discrete state space. Note that the $+1$ is strictly not necessary, but simply added to map to some discrete state space counting from 1.

Using Eq. \eqref{eq:defQ} we now have 
\begin{equation}
\label{eq:qgcmeqqiohmm}
\begin{aligned}
Q_{\text{GCM}}(\Omega,\hat{\Omega})&=\mathbb{E}_{\mathbf{Z}|\mathbf{Y}, \mathbf{X}, \hat{\Omega}}\left[l_c^{\text{GCM}}(\Omega;\mathbf{Z}, \mathbf{Y}, \mathbf{X})\right]\\
&=\mathbb{E}_{\mathbf{Z}|\mathbf{Y}, \mathbf{X}, \hat{\Omega}}\left[\sum_{i=1}^{n}\log\mathbb{P}\left(\mathbf{y}_{1:T}^{(i)}, \mathbf{z}_{1:T}^{(i)}|\mathbf{x}_{1:T}^{(i)};\Omega\right)\right]\\
&=\mathbb{E}_{\mathbf{Z}|\mathbf{Y}, \mathbf{X}, \hat{\Omega}}\left[\sum_{i=1}^{n}\sum_{t=1}^{T}\sum_{k=1}^{K}\1_{\{z_t^{(i)}=k\}}\log\mathbb{P}\left(y_{t}^{(i)}| z_{t}^{(i)}=k, \mathbf{x}_{t}^{(i)}, y_{t-1}^{(i)};\Omega\right) \right.\\
& \quad + \left. \sum_{k^{'}=1}^{K}\1_{\{z_t^{(i)}=k, z_{t-1}^{(i)}=k^{'}\}}\log\mathbb{P}\left(z_t^{(i)}=k|z_{t-1}^{(i)}=k^{'},\mathbf{x}_{t}^{(i)}, y_{t-1}^{(i)};\Omega\right)\right]\\
&=\sum_{i=1}^{n}\sum_{t=1}^{T}\sum_{k=1}^{K}\hat{\zeta}_{k,t}^{(i)}\log\P\left(y_t^{(i)}|z_t^{(i)}=k, \mathbf{x}_t^{(i)},y_{t-1}^{(i)};\Omega\right)\\
& \quad +\sum_{k^{'}}^{K}\hat{h}_{k^{'}k,t}^{(i)}\log\mathbb{P}\left(z_t^{(i)}=k|z_{t-1}^{(i)}=k^{'},\mathbf{x}_{t}^{(i)}, y_{t-1}^{(i)};\Omega\right)\\
&=\sum_{i=1}^{n}\sum_{t=1}^{T}\sum_{k=1}^{K}\hat{\zeta}_{k,t}^{(i)}\log\P\left(y_t^{(i)}|z_t^{(i)}=k, \mathbf{x}_t^{(i)'};\Omega\right)\\
& \quad +\sum_{k^{'}}^{K}\hat{h}_{k^{'}k,t}^{(i)}\log\mathbb{P}\left(z_t^{(i)}=k|z_{t-1}^{(i)}=k^{'},\mathbf{x}_{t}^{(i)'};\Omega\right)\\
&=\sum_{i=1}^{n}\sum_{t=1}^{T}\sum_{k=1}^{K}\left(\hat{\zeta}_{k,t}^{(i)}\log f_y\left(y_t^{(i)}|z_t^{(i)}=k,\mathbf{x}_t^{(i)'};\Omega\right)+\sum_{k^{'}=1}^{K}\hat{h}_{k^{'}k,t}^{(i)}\log\tilde{\varphi}_{k^{'}k,t}^{(i)}(\Omega)\right),
\end{aligned}
\end{equation}
with $\mathbf{x}_t^{(i)'}=\left(\mathbf{x}_t^{(i)}, y_{t-1}^{(i)}\right)$.
\end{proof}
Hence, as a result, when our objective is to estimate a GCM using EM, we can simply model it as it were an IO-HMM, but adding the previous click to the input vector. For notational simplicity, we will in the remainder of this paper write $\mathbf{x}_t^{(i)}$ instead of $\mathbf{x}_t^{(i)'}$. We deliberately write $\tilde{\varphi}_{k^{'}k,t}^{(i)}(\Omega)$, and not $\varphi_{k^{'}k,t}^{(i)}(\Omega)$, for reasons that will become apparent in Proposition \ref{prop:noemmision}. 

\section{On the estimation of GCMs using EM for IO-HMM}
\subsection{Notes on the E-step}
As briefly discussed, we believe that modeling GCMs as IO-HMMs has as main advantage that, given transition probabilities $\tilde{\varphi}_{k^{'}k,t}^{(i)}(\Omega)$, the IO-HMM framework directly provides an EM procedure. Although this approach will not reduce time complexity, we believe it does greatly reduces the effort required of finding expressions for the EM-updates. In particular, given Proposition \ref{prop:GCMtoIOHMM}, \citet{bengio1995input} immediately provides the expression required during the E-step. Since GCM is a simplified case of IO-HMM, we can make use of Proposition \ref{prop:noemmision} to somewhat simplify Expression \eqref{eq:qgcmeqqiohmm}. 

\begin{proposition}
\label{prop:noemmision}
There exists transition probabilities $\{\varphi_{k^{'}k,t}^{(i)}(\Omega)\}_{i=1,\hdots,n}^{t=1,\hdots,T}$, with $k^{'},k\in\{1,\hdots,K+1\}$, such that maximizing $Q_{\text{GCM}}(\Omega,\hat{\Omega})$ is equivalent to maximizing
\begin{equation}
\label{eq:QGCMalt}
Q^{'}_{\text{GCM}}(\Omega,\hat{\Omega})=\sum_{i=1}^{n}\sum_{t=1}^{T}\sum_{k=1}^{K+1}\sum_{k^{'}=1}^{K+1}\hat{h}_{k^{'}k,t}^{(i)}\log\varphi_{k^{'}k,t}^{(i)}(\Omega).
\end{equation}
\end{proposition}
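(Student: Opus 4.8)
The plan is to \emph{absorb} the emission factor $f_y$ into the transition probabilities, at the cost of a single extra latent state, so that $Q_{\text{GCM}}$ turns into a pure transition objective of the form \eqref{eq:QGCMalt}. Starting from the expression for $Q_{\text{GCM}}$ in \eqref{eq:qgcmeqqiohmm}, the immediate difficulty is that its emission term carries the marginal posterior $\hat{\zeta}_{k,t}^{(i)}$ whereas its transition term carries the pairwise posterior $\hat{h}_{k'k,t}^{(i)}$, so the two sums cannot be merged as written. First I would invoke the marginalization identity $\hat{\zeta}_{k,t}^{(i)}=\sum_{k'=1}^{K}\hat{h}_{k'k,t}^{(i)}$ (the smoothed probability of occupying state $k$ at time $t$ is the sum over the previous state of the pairwise smoothed probabilities, with the usual convention at $t=1$, where the ``transition'' is the initial law $\mathcal{P}(\bm{\psi}_1)=\{\mathbf{x}_1\}$). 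Substituting this rewrites the emission term with the \emph{same} coefficient $\hat{h}_{k'k,t}^{(i)}$ as the transition term, so the two combine into one logarithm,
\[
\hat{h}_{k'k,t}^{(i)}\Bigl(\log f_y\bigl(y_t^{(i)}\mid z_t^{(i)}=k,\mathbf{x}_t^{(i)};\Omega\bigr)+\log\tilde{\varphi}_{k'k,t}^{(i)}(\Omega)\Bigr)=\hat{h}_{k'k,t}^{(i)}\log\bigl(f_y\,\tilde{\varphi}_{k'k,t}^{(i)}\bigr).
\]

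This suggests defining, for $k\in\{1,\dots,K\}$, the folded quantity $\varphi_{k'k,t}^{(i)}(\Omega):=f_y\bigl(y_t^{(i)}\mid z_t^{(i)}=k,\mathbf{x}_t^{(i)};\Omega\bigr)\,\tilde{\varphi}_{k'k,t}^{(i)}(\Omega)$. Since $f_y\le 1$, each row only satisfies $\sum_{k=1}^{K}\varphi_{k'k,t}^{(i)}\le 1$, so these entries alone are sub-stochastic. The second step restores normalization by using the extra state $K+1$ as an absorbing \emph{sink}: set $\varphi_{k',K+1,t}^{(i)}:=1-\sum_{k=1}^{K}\varphi_{k'k,t}^{(i)}\ge 0$ and $\varphi_{K+1,k,t}^{(i)}:=\1_{\{k=K+1\}}$. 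This is precisely where the index range $\{1,\dots,K+1\}$ comes from. The leftover mass sent to the sink equals $\sum_{k}\bigl(1-f_y\bigr)\tilde{\varphi}_{k'k,t}^{(i)}$, i.e.\ the probability of the click outcome \emph{opposite} to the one actually observed at position $t$; since that outcome contradicts the conditioning data, the sink carries no smoothed posterior mass, $\hat{h}_{k',K+1,t}^{(i)}=\hat{h}_{K+1,k,t}^{(i)}=0$.

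The third step is to assemble $Q'_{\text{GCM}}$ and compare. Because the sink rows and columns are weighted by zero posteriors (with the convention $0\log 0=0$), the four-index sum in \eqref{eq:QGCMalt} collapses onto $k,k'\in\{1,\dots,K\}$, and by the two preceding steps each surviving term equals the corresponding term of $Q_{\text{GCM}}$. Hence $Q'_{\text{GCM}}(\Omega,\hat{\Omega})=Q_{\text{GCM}}(\Omega,\hat{\Omega})$ as functions of $\Omega$ for the common E-step quantities $\hat{h}_{k'k,t}^{(i)}$, and therefore their maximizers over $\Omega$ coincide, which is the claim.

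The step I expect to be the main obstacle is the bookkeeping around the sink. I would need to verify that $\varphi$ is a genuine transition law on $\{1,\dots,K+1\}$ (nonnegative entries, rows summing to one), that the surviving folded entries stay strictly positive wherever $\hat{h}_{k'k,t}^{(i)}>0$ so that the logarithm is well defined --- here the assumptions $\vartheta_p\in(0,1)$ and ``exactly one click state'' are what guarantee positivity, the latter forcing $f_y\in\{0,1\}$ off the click state so that any path contradicting the observed $y_t$ already has zero posterior --- and, if one wants full equivalence of the two EM procedures rather than just of the M-step, that the augmented chain reproduces the GCM's smoothed posteriors, equivalently that conditioning the absorbing chain on never entering the sink returns the original GCM path posterior.
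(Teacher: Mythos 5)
Your argument is essentially correct, but it reaches the $(K+1)$-state chain by a different mechanism than the paper. You fold the emission directly into the transition via the marginalization identity $\hat{\zeta}_{k,t}^{(i)}=\sum_{k^{'}}\hat{h}_{k^{'}k,t}^{(i)}$ and then restore row-stochasticity with an absorbing sink carrying zero posterior weight. The paper instead augments the latent state itself: it introduces one extra binary variable $\psi_{t,y}^{(i)}$ with success probability $\vartheta_{t,y}^{(i)}=f_y(\cdot)$, conditioned on the \emph{previous} state so the model stays a GCM, and multiplies the transition by $[(\vartheta_{t,y}^{(i)})^{\psi_{t,y}^{'}}(1-\vartheta_{t,y}^{(i)})^{1-\psi_{t,y}^{'}}]^{\1_{\{k=\mathcal{C}_t\}}}$; since only the click state is affected, this splits exactly one state and yields $K+1$ states, after which the emission becomes a deterministic, $\Omega$-free indicator that drops out of the maximization. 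The practical difference is the one you half-identify in your final paragraph: your folded entries $f_y(y_t^{(i)}\mid z_t^{(i)}=k,\cdot)\,\tilde{\varphi}_{k^{'}k,t}^{(i)}$ depend on the realized output $y_t^{(i)}$ at the \emph{same} time step, so they are not the transition law of any IO-HMM whose inputs are $(\mathbf{x}_t^{(i)},y_{t-1}^{(i)})$ — they are a data-dependent reweighting, and recovering the correct smoothed posteriors from your sink chain requires conditioning on never entering the sink, which amounts to reintroducing a deterministic emission and collapses your construction onto the paper's. The paper's version keeps $\varphi_{k^{'}k,t}^{(i)}$ a genuine function of the inputs and the target state only, which is what lets Proposition \ref{prop:ESTEP} apply verbatim downstream; your version proves the literal existence statement of the proposition (with the convention that sink-indexed $\hat{h}$'s vanish) by an arguably cleaner algebraic identity, but buys less for the rest of the paper. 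Neither version fully verifies that the augmented chain reproduces the original posteriors, so you are not held to a higher standard there than the paper holds itself.
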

\begin{proof}
As one might expect, our objective here is to absorb the emission probability $f_y(\cdot)$ into the new transition probabilities $\varphi_{k^{'}k,t}^{(i)}(\Omega)$. If we define 
\begin{equation}
\vartheta_{t,y}^{(i)} = f_y\left(y_t^{(i)}|z_t^{(i)}=k,\mathbf{x}_t^{(i)'};\Omega\right),
\end{equation}
then a natural thing to do would be to introduce a new latent binary variable $\psi_{t, y}^{(i)}$, with 
\begin{equation}
\P(\psi_{t, y}^{(i)}=1|\bm{\psi}_{t}^{(i)},\mathbf{x}_t^{(i)'};\Omega)=\vartheta_{t,y}^{(i)}.
\end{equation}
However, this would imply $\mathcal{P}(\psi_{t, y}^{(i)}) = \{\bm{\psi}_{t}^{(i)}, \mathbf{x}_t^{(i)}\}$. Hence, the resulting model would not be a GCM, as $\psi_{t, y}^{(i)}$ depends on the current latent state, not the previous latent state. To circumvent this problem, we can include the dependency of the emission on the current state into the definition of the transition probability. Here we make use of the fact that there exists only one click state. I.e., only when $k=\mathcal{C}_t$ does $\vartheta_{t,y}^{(i)}$ affect the transition probability. 

At this point, it is useful to consider the expression that we, according to Eq. \eqref{eq:qgcmeqqiohmm}, would expect for the transition probability. Let $\bm{\psi}_k^{'}=(\psi^{'}_{k,1},\hdots,\psi_{k,P}^{'})$ be the (unique) vector corresponding to $\text{bin}(\bm{\psi}_k^{'})=k$. Under the definition of GCM (Def. \ref{def:GCM}), $\{\psi_{t,1}^{(i)}, \hdots, \psi_{t,P}^{(i)}\}$ are mutually independent given $\mathbf{x}_{t}^{(i)}$ and $\bm{\psi_{t-1}}^{(i)}$, such that we would obtain
\begin{equation}
\label{eq:varphidef1}
\tilde{\varphi}_{k^{'}k,t}^{(i)}(\Omega)=\prod_{p=1}^{P}\left(\vartheta_{t,p}^{(i)}\right)^{\psi_{t,p}^{'}}\left(1-\vartheta_{t,p}^{(i)}\right)^{1-\psi_{t,p}^{'}}.
\end{equation}
Now let $\P(\psi_{t, y}^{(i)}=1|\bm{\psi}_{t-1}^{(i)},\mathbf{x}_t^{(i)'};\Omega)=\vartheta_{t,y}^{(i)}$, then after including the additional variable, the transition probability becomes
\begin{equation}
\varphi_{k^{'}k,t}^{(i)}(\Omega)=\prod_{p=1}^{P}\left(\vartheta_{t,p}^{(i)}\right)^{\psi_{t,p}^{'}}\left(1-\vartheta_{t,p}^{(i)}\right)^{1-\psi_{t,p}^{'}}\left[\left(\vartheta_{t,y}^{(i)}\right)^{\psi_{t,y}^{'}}\left(1-\vartheta_{t,y}^{(i)}\right)^{1-\psi_{t,y}^{'}}\right]^{\1_{\{k=\mathcal{C}_t\}}}.
\end{equation}
Now let $z_{t}^{(i)'}=\text{bin}\left(\bm{\psi}_t^{(i)'}\right)$, with $\bm{\psi}_t^{(i)'}=\left(\psi_{t,1}^{(i)},\hdots,\psi_{t,P}^{(i)}, \psi_{t,y}^{(i)}\right)$ being the state in the new augmented state space, we obtain $f_y(y_t^{(i)}|z_t^{(i)'}=k,\mathbf{x}_t^{(i)};\Omega)=1$. Hence, indeed this new emission probability does not depend on $\Omega$, and the emission can be ignored in the maximization. 

\end{proof}

It should also be noted that in many click models, including CZM and UBM, the emission probability does not depend on $\Omega$ by definition. Hence, many models do not require an additional latent state variable, and for those models the cardinality of the state space could remain the same. Though the analysis of Proposition \ref{prop:noemmision} still holds if we would have $\vartheta_{t,y}^{(i)}=1$, hence we can always augment the state space. For notational simplicity, we will redefine $K$ and $P$ to represent the size of the augmented state space and augmented number of latent variables at $(i,t)$ respectively.

Dropping superscript $(i)$ for a moment, the E-step is given by Proposition \ref{prop:ESTEP}, which are also known as the forward-backward equations.
\begin{proposition}
\label{prop:ESTEP}
Let $\alpha_{k,t}=\P(\mathbf{y}_{1:t}, z_t=k|\mathbf{x}_{1:t};\Omega)$ and $\beta_{k,t}=\P(\mathbf{y}_{t+1:T}|z_t=k, \mathbf{x}_{t+1:T};\Omega)$, then we obtain:
\begin{equation}
\label{eq:alphadef}
\hat{\alpha}_{k,t}=\1_{\{y_t=1|z_t=k\}}\sum_{k^{'}=1}^{K}\hat{\varphi}_{k^{'}k,t}(\mathbf{x}_{t})\hat{\alpha}_{k^{'},t-1},
\end{equation}
\begin{equation}
\label{eq:betadef}
\hat{\beta}_{k,t}= \sum_{k^{'}=1}^{K}\hat{\varphi}_{kk^{'},t}(\mathbf{x}_{t+1})\hat{\beta}_{k^{'},t+1}\1_{\{y_{t+1}=1|z_{t+1}=k^{'}\}},
\end{equation}
and
\begin{equation}
\label{eq:hdef}
\hat{h}_{k^{'}k,t}=\frac{\hat{\beta}_{k,t}\hat{\alpha}_{k^{'},t-1}\hat{\varphi}_{k^{'}k,t}(\mathbf{x}_t)}{\sum_{\ell=1}^{K}\hat{\alpha}_{\ell,T}}\1_{\{y_{t}=1|z_t=k\}},
\end{equation}
with $\hat{\varphi}_{kk^{'},t}(\mathbf{x}_t)$ the estimated transition probability under $\hat{\Omega}$.
\end{proposition}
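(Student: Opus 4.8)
The plan is to recognize Eqs.~\eqref{eq:alphadef}--\eqref{eq:hdef} as the standard forward--backward recursions for an IO-HMM, whose only engine is the factorization of the joint law implied by the conditional-independence structure of the model. By Propositions~\ref{prop:GCMtoIOHMM} and~\ref{prop:noemmision} I may work entirely within the IO-HMM picture on the augmented, single-dimensional state space, where the emission has been absorbed so that $f_y(y_t|z_t=k,\mathbf{x}_t;\Omega)$ is deterministic; the factor $\1_{\{y_t=1|z_t=k\}}$ appearing in the statement is exactly this (now $\Omega$-free) emission, equal to $1$ when state $k$ is consistent with the observed outcome $y_t$ and $0$ otherwise. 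Throughout I would suppress $\Omega$ and the session index $i$, and read $h_{k'k,t}=\P(z_{t-1}=k',z_t=k\mid\mathbf{y}_{1:T},\mathbf{x}_{1:T})$ directly from its definition via $a_{k,t}=\1_{\{z_t=k\}}$.

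For the forward relation~\eqref{eq:alphadef} I would start from $\alpha_{k,t}=\P(\mathbf{y}_{1:t},z_t=k\mid\mathbf{x}_{1:t})$, split $\mathbf{y}_{1:t}=(\mathbf{y}_{1:t-1},y_t)$, and insert the previous hidden state by the law of total probability, $\alpha_{k,t}=\sum_{k'}\P(\mathbf{y}_{1:t-1},y_t,z_t=k,z_{t-1}=k'\mid\mathbf{x}_{1:t})$. Applying the chain rule top-down and then pruning conditioning sets by the DBN structure, each summand factors as $\P(y_t\mid z_t=k,\mathbf{x}_t)\,\P(z_t=k\mid z_{t-1}=k',\mathbf{x}_t)\,\P(\mathbf{y}_{1:t-1},z_{t-1}=k'\mid\mathbf{x}_{1:t-1})$, i.e. emission $\times\ \varphi_{k'k,t}(\mathbf{x}_t)\times\alpha_{k',t-1}$. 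Pulling the state-$k$ emission out of the sum gives exactly~\eqref{eq:alphadef}, with base case $\alpha_{k,1}=\1_{\{y_1=1|z_1=k\}}\,\P(z_1=k\mid\mathbf{x}_1)$ using $\mathcal{P}(z_1)=\{\mathbf{x}_1\}$.

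The backward relation~\eqref{eq:betadef} is dual: from $\beta_{k,t}=\P(\mathbf{y}_{t+1:T}\mid z_t=k,\mathbf{x}_{t+1:T})$ I would marginalize over $z_{t+1}$ and peel off $y_{t+1}$, obtaining the three factors $\P(z_{t+1}=k'\mid z_t=k,\mathbf{x}_{t+1})$, $\P(y_{t+1}\mid z_{t+1}=k',\mathbf{x}_{t+1})=\1_{\{y_{t+1}=1|z_{t+1}=k'\}}$ and $\P(\mathbf{y}_{t+2:T}\mid z_{t+1}=k',\mathbf{x}_{t+2:T})=\beta_{k',t+1}$, summed over $k'$, with terminal condition $\beta_{k,T}=1$. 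For the pairwise posterior~\eqref{eq:hdef} I would write $h_{k'k,t}=\P(z_{t-1}=k',z_t=k,\mathbf{y}_{1:T}\mid\mathbf{x}_{1:T})/\P(\mathbf{y}_{1:T}\mid\mathbf{x}_{1:T})$, split $\mathbf{y}_{1:T}$ at $t$ into past, present and future, and factor the numerator as $\alpha_{k',t-1}\,\varphi_{k'k,t}(\mathbf{x}_t)\,\1_{\{y_t=1|z_t=k\}}\,\beta_{k,t}$; the denominator is $\sum_\ell\alpha_{\ell,T}=\P(\mathbf{y}_{1:T}\mid\mathbf{x}_{1:T})$ by marginalizing $\alpha_{\ell,T}$ over $\ell$, which reproduces~\eqref{eq:hdef}.

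The only real work --- and the step I expect to be the main obstacle --- is justifying each factorization rigorously from the graph in Figure~\ref{fig:IOHMM} rather than by appeal to the familiar HMM identities. Two d-separation facts carry everything: first, the emission/Markov locality, that $y_t\perp(\mathbf{y}_{1:t-1},z_{1:t-1})\mid z_t,\mathbf{x}_t$ and $z_t\perp(\mathbf{y}_{1:t-1},z_{1:t-2})\mid z_{t-1},\mathbf{x}_t$; and second, the input-output property that, since each $\mathbf{x}_s$ has no parents and influences only $z_s,y_s$, a present or future input is d-separated from past states and outputs. The latter is what lets me replace $\P(\mathbf{y}_{1:t-1},z_{t-1}=k'\mid\mathbf{x}_{1:t})$ by $\P(\mathbf{y}_{1:t-1},z_{t-1}=k'\mid\mathbf{x}_{1:t-1})=\alpha_{k',t-1}$ in the forward step, and symmetrically to drop $\mathbf{x}_{1:t}$ from the backward factors, so that the conditioning sets align precisely with the definitions of $\alpha_{k,t}$ and $\beta_{k,t}$. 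Once these independencies are verified, the three recursions follow by routine bookkeeping.
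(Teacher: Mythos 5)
Your derivation is correct, but it is genuinely more than what the paper does: the paper's entire proof of this proposition is the single line ``Follows directly from \cite{bengio1995input}'', delegating the forward--backward recursions to the IO-HMM reference, whereas you reconstruct them from scratch. Your route --- marginalize over the adjacent hidden state, apply the chain rule, and prune conditioning sets by d-separation --- is the standard one, and you correctly identify the two independence facts that make it go through in the input-output setting: the Markov/emission locality, and the fact that a parentless future input $\mathbf{x}_t$ is d-separated from past states and outputs (all connecting paths run through the unobserved colliders $z_t$ or $y_t$), which is exactly what licenses replacing $\P(\mathbf{y}_{1:t-1},z_{t-1}=k'\mid\mathbf{x}_{1:t})$ by $\alpha_{k',t-1}$. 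You also read the paper's indicator $\1_{\{y_t=1|z_t=k\}}$ the intended way, as the deterministic ($\Omega$-free) emission left over after Proposition~\ref{prop:noemmision} absorbs the emission into the augmented transitions. What your version buys is self-containedness and an explicit record of which conditional independencies of Figure~\ref{fig:IOHMM} are actually used --- the part that distinguishes an IO-HMM from a plain HMM; what the paper's citation buys is brevity and an appeal to an already-verified result. The only cosmetic mismatch is the base case: you anchor the forward pass at $\alpha_{k,1}=\1_{\{y_1=1|z_1=k\}}\P(z_1=k\mid\mathbf{x}_1)$, while the paper's vectorized algorithm initializes at a virtual position $t=0$ with $A_{1:K,0}\leftarrow D_{1:K,0}$; the two conventions are equivalent and this is not a gap.
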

\begin{proof}
Follows directly from \cite{bengio1995input}.
\end{proof}

\subsection{Notes on the M-step}
One useful property of GCM is that Eq. \eqref{eq:QGCMalt} can be split into multiple optimization problems.
\begin{proposition}
Let $\bm{\theta}_p$ be the weight vector of dimension $m_p$ for parameter $p$, such that  $\vartheta_{t,p}^{(i)}=f_{\text{act}}^{p}(\mathbf{x}_{t,p}^{(i)};\bm{\theta}_{p})=\P(\psi_{t,p}^{(i)}=1|\mathcal{P}(\psi_{t,p}^{(i)}),\mathbf{x}_{t,p}^{(i)};\Omega)$, with $\mathbf{x}_{t,p}^{(i)}$ the covariate vector for parameter $p$, then
\begin{equation}\argmax\limits_{\Omega}Q_{GCM}(\Omega,\hat{\Omega})=\left\lbrace\argmax\limits_{\bm{\theta}_{p}\in\R^{m_{p}}}Q_{GCM}^{'}(\bm{\theta}_{p},\hat{\Omega})\right\rbrace_{p=1,\hdots,P}.
\end{equation}
\end{proposition}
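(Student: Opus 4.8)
The plan is to show that $Q'_{\text{GCM}}$ of Eq. \eqref{eq:QGCMalt} decomposes additively into $P$ pieces, each involving only a single weight block $\bm{\theta}_p$, after which the decoupling of the $\argmax$ is a standard consequence of maximizing a separable objective over disjoint coordinate blocks. By Proposition \ref{prop:noemmision} it suffices to work with $Q'_{\text{GCM}}$ rather than $Q_{\text{GCM}}$, since the two share the same maximizers; the emission has already been absorbed into the augmented-state transition probabilities.

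First I would substitute the factorized transition probability \eqref{eq:varphidef1} (now read over the augmented state space, with $K$ and $P$ redefined as in the remark following Proposition \ref{prop:noemmision}) into \eqref{eq:QGCMalt}. Taking logarithms turns the product over $p$ into a sum,
\[
\log\varphi_{k^{'}k,t}^{(i)}(\Omega)=\sum_{p=1}^{P}\left[\psi_{t,p}^{'}\log\vartheta_{t,p}^{(i)}+(1-\psi_{t,p}^{'})\log\!\left(1-\vartheta_{t,p}^{(i)}\right)\right],
\]
so that after interchanging the (finite) summations to bring the sum over $p$ outermost one obtains $Q'_{\text{GCM}}(\Omega,\hat{\Omega})=\sum_{p=1}^{P}Q'_{\text{GCM}}(\bm{\theta}_p,\hat{\Omega})$, where the $p$-th summand collects exactly the terms containing $\vartheta_{t,p}^{(i)}$. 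The crucial point is that, by the hypothesis $\vartheta_{t,p}^{(i)}=f_{\text{act}}^{p}(\mathbf{x}_{t,p}^{(i)};\bm{\theta}_p)$, each summand depends on $\Omega$ only through its own block $\bm{\theta}_p$, while the weights $\hat{h}_{k^{'}k,t}^{(i)}$ and the bits $\psi_{t,p}^{'}$ are computed under the fixed $\hat{\Omega}$ and hence act as constants during the M-step. Finally I would invoke the elementary fact that if $f(\bm{\theta}_1,\hdots,\bm{\theta}_P)=\sum_{p}g_p(\bm{\theta}_p)$ is a sum of functions over disjoint coordinate blocks, then $\argmax f=(\argmax_{\bm{\theta}_1}g_1,\hdots,\argmax_{\bm{\theta}_P}g_P)$, which yields the claimed equality.

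I expect the only real obstacle to be bookkeeping rather than analysis. One must verify that the parameter set genuinely partitions as $\Omega=\{\bm{\theta}_1,\hdots,\bm{\theta}_P\}$ with no weight shared across two distinct $\vartheta_{t,p}$'s, since a shared parameter would couple two summands and defeat the separation; this disjointness is exactly what the proposition's hypothesis on $f_{\text{act}}^{p}$ encodes. The remaining care is to confirm that the product form \eqref{eq:varphidef1} is indeed the correct shape of $\varphi$ under the augmented state space of Proposition \ref{prop:noemmision} (so that the absorbed emission parameter $\vartheta_{t,y}^{(i)}$ appears merely as one additional index $p$), and that the index juggling in the nested sums is legitimate. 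No convexity or first-order optimality conditions are needed for the reduction itself; those would enter only when actually solving each per-$p$ subproblem.
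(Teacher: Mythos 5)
Your proposal matches the paper's own argument: both expand $\log\varphi_{k'k,t}^{(i)}$ via the product form of the transition probability into a sum over $p$, interchange the finite sums to bring $p$ outermost, and conclude by separability of the $\argmax$ over the disjoint blocks $\bm{\theta}_p$. The only presentational difference is that the paper handles your ``bookkeeping'' concern explicitly by introducing the index sets $\mathcal{A}^{(+)}_{t,p,i}$ and $\mathcal{A}^{(-)}_{t,p,i}$, which restrict attention to transitions with $\vartheta_{t,p}^{(i)}\in(0,1)$ (so that deterministic factors never produce $\log 0$ terms) and sort the remaining terms into the $\log f_{\text{act}}^{p}$ and $\log(1-f_{\text{act}}^{p})$ parts of the per-$p$ weighted log-loss \eqref{eq:QGCMpart}; your indicator-exponent formulation encodes the same split.
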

\begin{proof}
Let $\bm{\psi}_k^{'}=(\psi^{'}_{k,1},\hdots,\psi_{k,P}^{'})$ be the (unique) vector corresponding to $\text{bin}(\bm{\psi}_k^{'})=k$. Note that we now assume this represents the state in the augmented state space, i.e., the vector includes $\psi_{k,y}^{'}$, which was defined in Proposition \ref{prop:noemmision}. Furthermore, let $\mathcal{K}(k)$ be original state that we would obtain by omitting $\psi_{k,y}^{'}$ in $\bm{\psi}_k^{'}$.

We define $\mathcal{A}^{(+)}_{t,p,i}$, $\mathcal{A}^{(-)}_{t,p,i}$ as be the set of all transition $(k^{'},k)\in\{1,\hdots,K+1\}\times\{1,\hdots,K+1\}$ having $\vartheta_{t,p}^{(i)}\in(0,1)$, and for which $\psi_{t,p}^{(i)}=1$ and $\psi_{t,p}^{(i)}=0$ respectively. Then
\begin{equation}
\begin{aligned}
Q^{'}_{\text{GCM}}(\Omega,\hat{\Omega})&=\sum_{i=1}^{n}\sum_{t=1}^{T}\sum_{k=1}^{K+1}\sum_{k^{'}=1}^{K+1}\hat{h}_{k^{'}k,t}^{(i)}\log\varphi_{k^{'}k,t}^{(i)}(\Omega)\\
&=\sum_{i=1}^{n}\sum_{t=1}^{T}\sum_{k=1}^{K}\sum_{k^{'}=1}^{K}\hat{h}_{k^{'}k,t}^{(i)}\sum_{\psi\in\{\psi_{t,1}^{(i)},\hdots,\psi_{t,P}^{(i)}\}\setminus \{\psi_{t,y}^{(i)}\}}\log\P(\psi|\mathcal{P}(\psi_{t,p}^{(i)}),\mathbf{x}_{t,p}^{(i)};\Omega)\\
& \quad + \1_{\{\mathcal{K}(k)=\mathcal{C}_{\mathcal{K}(k)}\}}\log\P(\psi_{t,y}^{(i)}=\psi_{k,y}^{'}|\mathcal{P}(\psi_{t,y}^{(i)}),\mathbf{x}_{t,y}^{(i)};\Omega)\\
&=\sum_{p=1}^{P}\sum_{i=1}^{n}\sum_{t=1}^{T}\sum_{(k^{'},k)\in\mathcal{A}^{(+)}_{t,p,i}}\hat{h}_{k^{'}k,t}^{(i)}\log\P(\psi_{t,p}^{(i)}=1|\mathcal{P}(\psi_{t,p}^{(i)}),\mathbf{x}_{t,p}^{(i)};\Omega)\\
&\quad +\sum_{(k^{'},k)\in\mathcal{A}^{(-)}_{t,p,i}}\hat{h}_{k^{'}k,t}^{(i)}\log\P(\psi_{t,p}^{(i)}=0|\mathcal{P}(\psi_{t,p}^{(i)}),\mathbf{x}_{t,p}^{(i)};\Omega)
\end{aligned}
\end{equation}
Hence, optimizing $Q_{\text{GCM}}(\Omega,\hat{\Omega})$ is therefore equivalent to optimizing 


\begin{equation}
\label{eq:QGCMpart}
Q^{p}_{\text{GCM}}(\bm{\theta}_{p},\hat{\Omega})=\sum_{i=1}^{n}\sum_{t=1}^{T}\sum_{(k^{'},k)\in \mathcal{A}^{(+)}_{t,p,i}}\hat{h}_{k^{'}k,t}^{(i)}\log f_{\text{act}}^{p}(\mathbf{x}_{t,p}^{(i)};\bm{\theta}_{p})+\sum_{(k^{'},k)\in \mathcal{A}^{(-)}_{t,p,i}}\hat{h}_{k^{'}k,t}^{(i)}\log(1-f_{\text{act}}^{p}(\mathbf{x}_{t,p}^{(i)};\bm{\theta}_{p}))
\end{equation}
for each $p\in\{1,\hdots,P\}$ separately.
\end{proof}

\begin{proposition}
Let 
\begin{equation}
\hat{h}_{p}^{(+)}=\sum_{i=1}^{n}\sum_{t=1}^{T}\sum_{(k^{'},k)\in \mathcal{A}^{(+)}_{t,p,i}}\hat{h}_{k^{'}k,t}^{(i)},
\end{equation}
and 
\begin{equation}
\hat{h}_{p}^{(-)}=\sum_{i=1}^{n}\sum_{t=1}^{T}\sum_{(k^{'},k)\in \mathcal{A}^{(-)}_{t,p,i}}\hat{h}_{k^{'}k,t}^{(i)}.
\end{equation}
In case $f_{\text{act}}^{p}(\bm{\theta}_{p}; \mathbf{x}_{t,p}^{(i)})=\theta_{p}=\vartheta_p$, $\theta_p\in(0,1)$, the M-step becomes:
\begin{equation}
\label{eq:onecovcase}
\hat{\vartheta}_{p}=\frac{\hat{h}_{p}^{(+)}}{\hat{h}_{p}^{(+)}+\hat{h}_{p}^{(-)}}
\end{equation}
\end{proposition}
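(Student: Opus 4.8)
The plan is to start from the per-parameter objective $Q^{p}_{\text{GCM}}(\bm{\theta}_{p},\hat{\Omega})$ established in the preceding proposition, namely Eq. \eqref{eq:QGCMpart}, which has already done the essential work of decoupling the optimization across the parameters $p$. Under the hypothesis $f_{\text{act}}^{p}(\bm{\theta}_{p};\mathbf{x}_{t,p}^{(i)})=\vartheta_p$, the activation is a single scalar constant that no longer depends on $i$ or $t$. I would therefore substitute it directly into \eqref{eq:QGCMpart}; the factors $\log\vartheta_p$ and $\log(1-\vartheta_p)$ then pull out of the triple sums, and the remaining coefficients collapse exactly into the definitions of $\hat{h}_{p}^{(+)}$ and $\hat{h}_{p}^{(-)}$. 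This yields the reduced one-dimensional objective $Q^{p}_{\text{GCM}}(\vartheta_p,\hat{\Omega})=\hat{h}_{p}^{(+)}\log\vartheta_p+\hat{h}_{p}^{(-)}\log(1-\vartheta_p)$.

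From here it becomes a textbook single-variable optimization. I would differentiate with respect to $\vartheta_p$, obtaining the stationarity condition $\hat{h}_{p}^{(+)}/\vartheta_p-\hat{h}_{p}^{(-)}/(1-\vartheta_p)=0$, and solve the resulting linear equation $\hat{h}_{p}^{(+)}(1-\vartheta_p)=\hat{h}_{p}^{(-)}\vartheta_p$. Rearranging gives $(\hat{h}_{p}^{(+)}+\hat{h}_{p}^{(-)})\vartheta_p=\hat{h}_{p}^{(+)}$, which is precisely \eqref{eq:onecovcase}.

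To confirm that this stationary point is the maximizer demanded by the M-step rather than a minimum or saddle, I would note that the second derivative $-\hat{h}_{p}^{(+)}/\vartheta_p^{2}-\hat{h}_{p}^{(-)}/(1-\vartheta_p)^{2}$ is strictly negative on $(0,1)$ whenever the aggregated responsibilities $\hat{h}_{p}^{(+)},\hat{h}_{p}^{(-)}$ are nonnegative and not both zero. Hence $Q^{p}_{\text{GCM}}$ is strictly concave in $\vartheta_p$, so the unique interior stationary point is the global maximum over the admissible set $(0,1)$.

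I do not expect a genuine obstacle here; the content is elementary calculus once the reduction from the previous proposition is in place. The only points requiring a little care are (i) checking that the constant substitution is legitimate, i.e., that $f_{\text{act}}^{p}$ being independent of $(i,t)$ genuinely allows the logarithms to factor out of the sums so that the coefficients aggregate into $\hat{h}_{p}^{(+)}$ and $\hat{h}_{p}^{(-)}$, and (ii) a brief boundary/degeneracy remark: the solution lies strictly inside $(0,1)$ only when both $\hat{h}_{p}^{(+)}>0$ and $\hat{h}_{p}^{(-)}>0$, and if one of the responsibility masses vanishes the supremum is attained on the boundary while \eqref{eq:onecovcase} still returns the correct limiting value $0$ or $1$. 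I would mention this edge case in a sentence but not dwell on it.
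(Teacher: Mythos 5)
Your proof is correct. For comparison: the paper does not actually carry out this computation --- its entire proof is the one-line citation ``Follows directly from \cite[pp.~27--30]{chuklin2015click}'', deferring to the standard Bernoulli M-step derivation in that reference. What you wrote is precisely that standard derivation, made self-contained: substituting the constant activation into Eq.~\eqref{eq:QGCMpart}, pulling $\log\vartheta_p$ and $\log(1-\vartheta_p)$ out of the triple sums so the coefficients aggregate into $\hat{h}_p^{(+)}$ and $\hat{h}_p^{(-)}$, and maximizing the strictly concave one-dimensional objective $\hat{h}_p^{(+)}\log\vartheta_p+\hat{h}_p^{(-)}\log(1-\vartheta_p)$ by first-order conditions. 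Your added remarks --- that strict concavity requires the two masses to be nonnegative and not both zero, and that the formula still returns the correct boundary value when one mass vanishes --- address degeneracies the paper (and the cited treatment) silently glosses over. There is no gap; if anything, your version is the more complete argument.
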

\begin{proof}
Follows directly from \cite[pp. 27-30]{chuklin2015click}.
\end{proof}
Hence, by modeling GCM as an IO-HMM, we find exactly the same EM updates as in \cite{chuklin2015click}. However, IO-HMM also directly gives the E-step ( Proposition \ref{prop:ESTEP}), which in \cite{chuklin2015click} still had to be derived for each variable separately. Furthermore, \citet{bengio1995input} also show that if $\mathbf{x}_t=x_t\in\{1,\hdots,L\}$ for some $L\in\N$, then the M-step also has an analytical solution.

Apart from introducing IO-HMM and its estimation using EM, \citet{bengio1995input} also model \\$\P(\psi_{t,p}^{(i)}|\mathcal{P}(\psi_{t,p}^{(i)}),\mathbf{x}_t;\Omega)$ as the output of a neural network. In that case, one could interpret Equation \eqref{eq:QGCMpart} as a weighted log loss function. This may be a `best of both worlds' scenario: as shown by \cite{borisov2016neural}, recurrent neural networks may perform quite well in untangling the manifold obtained from the input vector $\mathbf{x}_t$, which seems in particular relevant when one uses noisy features such as historic click/skips. However, since we use the loss function of a GCM, we can still interpret the results of in terms of the latent variables defined in the proposed click model, such as attraction, satisfaction, or evaluation. I.e., our model remains in that respect interpretable.  

\subsection{Vectorized EM}
\label{subsubsec:implnotates}
\paragraph{Vectorized E-step}
To conclude this section, we will elaborate shortly on how $Q_{\text{GCM}}$ is optimized in the \verb|gecasmo| package. This relates mostly to rewriting the E-step and M-step in vector notation. Since in the following we consider a single query session, we will for the time being drop superscript $i$. We define
\begin{equation}
    A=
\begin{pmatrix}
\alpha_{10} & \hdots & \alpha_{1T}\\
\vdots & \ddots & \vdots\\
\alpha_{K0} & \hdots & \alpha_{KT}
\end{pmatrix}, \quad
    B=
\begin{pmatrix}
\beta_{10} & \hdots & \beta_{1T}\\
\vdots & \ddots & \vdots\\
\beta_{K0} & \hdots & \beta_{KT}
\end{pmatrix},
\end{equation}
and let
\begin{equation}
  H_t=
\begin{pmatrix}
h_{11,t} & \hdots & h_{1K,t}\\
\vdots & \ddots & \vdots\\
h_{K1,t} & \hdots & h_{KK,t}
\end{pmatrix}, \quad
M_t=\begin{pmatrix}
\varphi_{11,t} & \hdots & \varphi_{1K,t}\\
\vdots & \ddots & \vdots\\
\varphi_{K1,t} & \hdots & \varphi_{KK,t}
\end{pmatrix}, \quad
D=\begin{pmatrix}
\delta_{10} & \hdots & \delta_{1T}\\
\vdots & \ddots & \vdots\\
\delta_{K0} & \hdots & \delta_{KT}
\end{pmatrix},
\end{equation}
with $\delta_{kt}=1$ if state $k$ is the click state at time $t$ (0 otherwise). Note that we only have one click state, i.e., $\sum_{k=1}^{K}\delta_{kt}=1$ for all $t\in\{0,\hdots,T\}$. Let $\odot$ be the element-wise product. Using Expressions \eqref{eq:alphadef}, \eqref{eq:betadef}, and \eqref{eq:hdef}, we obtain the following vectorized E-step (Alg. \ref{alg:estep}).
\begin{algorithm}[htbp]
    $B_{1:K,T} \leftarrow 1$\\
    $A_{1:K, 0} \leftarrow D_{1:K,0}$\\
    \For{$t\leftarrow 1$ \KwTo $T$}{
        $A_{1:K,t} \gets y_{t}\left[ D_{1:K,t} \odot \left(M_t^{\top} A_{1:K,t-1}\right)\right] + \left(1-y_{t}\right)\left[\left(1-D_{1:K,t}\right)\odot \left(M_{t}^{\top} A_{1:K,t-1}\right)\right]$\\
    }
    \For{$t\leftarrow T$ \KwTo $1$}{
        $\Lambda_{t} \gets y_t\left(D_{1:K,t}\odot B_{1:K,t}\right) + (1-y_t)\left[(1- D_{1:K,t})\odot B_{1:K,t}\right]$\\
        $B_{1:K, t-1}\gets M_t\Lambda_{t}^{\top}$\\
        $H_{t}\gets \left[\frac{\Lambda_{t}}{\sum_{k=1}^{K}\alpha_{k,T}} A_{1:K,t-1}^{\top}\right] \odot M_t^{\top}$
    }
 \caption{Vectorized E-step}
 \label{alg:estep}
\end{algorithm}
 
\paragraph{Vectorized M-step}
For convenience, we will slightly rewrite \eqref{eq:QGCMpart}. Let
\begin{equation}
\label{eq:paramactfunction}
c_{k^{'}k,t,p,i}^{(o)}=\left\lbrace
\begin{array}{cc}
     1 & \text{if } (k^{'},k)\in\mathcal{A}_{t,p,i}^{(o)}, \quad o=+ \\
    -1 & \text{if } (k^{'},k)\in\mathcal{A}_{t,p,i}^{(o)}, \quad o=- \\ 
    0 & \text{otherwise}\\
\end{array}\right.,
\end{equation}
\begin{equation}
\label{eq:paramactmat}
I_{t,p,i}^{(o)}=
\begin{pmatrix}
c_{11,t,p,i}^{(o)} & c_{12,t,p,i}^{(o)} & \cdots & c_{1K,t,p,i}^{(o)} \\
\vdots & \vdots & \ddots &\vdots\\
c_{K1,t,p,i}^{(o)} & c_{K2,t,p,i}^{(o)} & \cdots & c_{KK,t,p,i}^{(o)} \\
\end{pmatrix},
\end{equation}
and
\begin{equation}
\label{eq:weightcomp}
w_{t,p,i}^{(o)}= \left(\left(\hat{H}_t^{(i)} \odot I_{t,p,i}^{(o)}\right)\mathbf{1}\right)^{\top}\mathbf{1},
\end{equation}
with $\mathbf{1}$ a vector of ones of size $K$, $o\in\{+,-\}$, and $\odot$ being the element-wise multiplication operator. Let $w_{t,p,i} = w_{t,p,i}^{(+)}+w_{t,p,i}^{(-)}$, now \eqref{eq:QGCMpart} can be rewritten as
\begin{equation}
\begin{aligned}
\label{eq:QGCMimp}
Q^{p}_{\text{GCM}}(\bm{\theta}_{p},\hat{\Omega})&=\sum_{i=1}^{n}\sum_{t=1}^{T}w_{t,p,i}^{(+)}\log f_{\text{act}}^{p}(\mathbf{x}_t;\mathbf{\bm{\theta}}_{p})+ (-w_{t,p,i}^{(-)})\log\left(1-f_{\text{act}}^{p}(\mathbf{x}_t;\mathbf{\bm{\theta}}_{p})\right)\\
&=\sum_{i=1}^{n}\sum_{t=1}^{T}|w_{t,p,i}|\left[\frac{\text{sgn}(w_{t,p,i})+1}{2}\log f_{\text{act}}^{p}(\mathbf{x}_t;\mathbf{\bm{\theta}}_{p})\right.\\
&\left.\quad+\left(1-\frac{\text{sgn}(w_{t,p,i})+1}{2}\right)\log(1-f_{\text{act}}^{p}(\mathbf{x}_t;\mathbf{\bm{\theta}}_{p}))\right].
\end{aligned}
\end{equation}
In the latter expression, we conveniently make use of the fact that for some triple ($t,p,i$), either $w_{t,p,i}^{(+)}$ or $w_{t,p,i}^{(-)}$ is non-zero. Hence, we only have to keep one $T\times P \times n$ tensor in memory, and use the sign of the weight to determine whether the weight is with respect to $\psi_{t,p}^{(i)}=1$ or $\psi_{t,p}^{(i)}=0$. Although the former expression in \eqref{eq:QGCMimp} has the same form as binary cross-entropy, we have $w_{t,p,i}^{(+)}+(-w_{t,p,i}^{(-)})\leq 1$, hence it should be interpreted as a weighted log-loss function. 

Finally, the GCM's EM algorithm is summarized in Algorithm \ref{alg:gcmem}.
\begin{algorithm}[htbp]
  \SetKwInOut{Input}{inputs}
  \SetKwInOut{Output}{output}
      \Input{Initial parameter estimates $\hat{\Omega}$;\\ 
      Covariates matrices $\{X_{p}\}_{p=1,\hdots,P}$;\\
      Parameter activation functions $\{f_{\text{act}}^{p}(\cdot)\}_{p=1,\hdots,P}$; \\ 
      Parameter activation matrices $\{I_{p,t}^{(+)}, I_{p,t}^{(-)}\}_{p=1,\hdots,P}^{t=t\hdots,T}$; \\
      Transition matrices $\{M_t^{(i)}\}_{i=1,\hdots,n}^{t=1,\hdots,T}$; \\
      Tolerance parameter $\epsilon$}; 
    \Output{Fitted parameters $\{\hat{\bm{\theta}}_{p}\}_{p=1,\hdots,P}$}
    Choose some initial parameters $\{\hat{\bm{\theta}}_p\}_{p=1,\hdots,P}$, and $\Delta>\epsilon$;\\ 
    \While{$\Delta>\epsilon$}{
    Run the E-step (Alg. \ref{alg:estep}), given current estimates $\hat{\Omega}$;\\ 
    Compute the weights for each variable $p$ using Expression \eqref{eq:weightcomp};\\
    Find for each variable $p$ new weights $\hat{\bm{\theta}}_{p}^{'}$ by optimizing \eqref{eq:QGCMimp};\\
    $\Delta = \sum_{p=1}^{p}|\hat{\bm{\theta}}_p-\hat{\bm{\theta}}_p^{'}|$;\\
    Set $\hat{\bm{\theta}}_{p}\leftarrow \hat{\bm{\theta}}_{p}^{'}$;\\
    }
     \Return $\{\hat{\bm{\theta}}_{p}\}_{p=1,\hdots,P}$;
 \caption{GCM's EM procedure}
 \label{alg:gcmem}
\end{algorithm}

\section{Modeling click models as GCM}
\label{sec:package}
\subsection{Mapping click models to GCM}
\label{subsec:clickmodeltoGCM}
In Section \ref{subsubsec:GCMtoIOHMM}, we introduced the generalized cascade model, and showed that optimization with EM is equivalent to that of the IO-HMM model, given that we add the click/skip at position $t-1$ to the input vector $\mathbf{x}_t^{(i)}$. In this section, we will show that common click models such as the Chapelle-Zhang model (CZM) \cite{chapelle2009dynamic}, and the User Browser Model (UBM) \cite{dupret2008user}, can be rewritten as GCMs. As a direct consequence, any simplification of CZM or UBM (such as the cascade model \cite{craswell2008experimental} or the dependent click model \cite{guo2009efficient}) are also GCMs. In fact, any click model where the latent variables can be combined into one discrete latent state variable $z_t$, such that the remaining model is still Markovian w.r.t. its parent as defined in GCM's definition (Def. \ref{def:GCM}), can be modeled as GCM. Hence, also more general models than CZM and UBM can be modeled as GCM.

\begin{definition}
\label{def:czmmodeldef}
The Chapelle-Zhang model (CZM) \cite{chapelle2009dynamic} is a Dynamic Bayesian Network consisting of latent binary variables $\{R_{r_i(t)}^{(i)}, S_{r_i(t)}^{(i)}, E_{t}^{(i)}\}^{i=1,\hdots,n}_{t=1,\hdots,T}$, and observed binary variables $\{y_t^{(i)}\}^{i=1,\hdots,n}_{t=1,\hdots,T}$, where $R_{r_i(t)}^{(i)}=1$ if the user finds the item at position $r_i(t)$ attractive (0 otherwise), $S_{r_i(t)}^{(i)}=1$ if the user is satisfied with the item at position $r_i(t)$ after having clicked the item (0 otherwise), $E_{t}^{(i)}=1$ if the item at position $t$ is evaluated by the user (0 otherwise), and $y_t^{(i)}=1$ if the item at position $t$ is clicked (0 otherwise). Given SERP $\mathcal{S}_i$, users navigate through the SERP according to  
\begin{equation}
\label{eq:SDBNdef3}
    \P(R_{r_i(t)}^{(i)}=1)=\phi_{r_i(t)}^{(R)},
\end{equation}
\begin{equation}
\label{eq:SDBNdef4}
    \P(S_{r_i(t)}^{(i)}=1|y_t^{(i)}=c)=\left\lbrace
    \begin{array}{ll}
        \phi_{r_i(t)}^{(S)} & \text{if } c=1 \\
        0 & \text{if } c=0
    \end{array}\right.;
\end{equation}
\begin{equation}
\label{eq:SDBNdef5}
\P(E_{t}^{(i)}=1| E_{t-1}^{(i)}=a, S_{r_i(t-1)}^{(i)}=b)=\left\lbrace
\begin{array}{ll}
    \gamma & \text{if } a=1, b=0 \\
    0 & \text{otherwise} 
\end{array}\right.,\quad t>1;
\end{equation}
\begin{equation}
\label{eq:SDBNdef6}
E_{1}^{(i)}=1;
\end{equation}
\begin{equation}
\label{eq:SDBNdef9}
    y_t^{(i)}=1 \iff E_{t}^{(i)}=1, R_{r_i(t)}^{(i)}=1.
\end{equation}
\end{definition}

\begin{definition}
\label{def:UBM}
The User Browser Model (UBM) \cite{dupret2008user} is a Dynamic Bayesian Network consisting of latent binary variables $\{R_{r_i(t)}^{(i)}, E_{t}^{(i)}\}^{i=1,\hdots,n}_{t=1,\hdots,T}$, and observed variables $\{y_t^{(i)}\}^{i=1,\hdots,n}_{t=1,\hdots,T}$, all having the same interpretation as in CZM, for which we assume users navigate through a SERP according to
\begin{equation}
\label{eq:ubmdef1}
    \P(R_{r_i(t)}^{(i)}=1)=\phi_{r_i(t)}^{(R)},
\end{equation}
\begin{equation}
\label{eq:ubmdef2}
    \P(E_{t}^{(i)}=1|y_{t^{'}}^{(i)}=1)=\gamma_{t^{'}t}, \quad t^{'}<t
\end{equation}
\begin{equation}
\label{eq:ubmdef3}
y_{0}^{(i)}=1;
\end{equation}
\begin{equation}
\label{eq:ubmdef5}
    y_{t}^{(i)}=1 \iff E_{t}^{(i)}=1, R_{r_i(t)}^{(i)}=1.
\end{equation}
\end{definition}

In Propositions \ref{prop:czmisgcm} and \ref{prop:ubmisgcm} again we briefly drop superscript $i$.

\begin{proposition}
\label{prop:czmisgcm}
CZM is a GCM
\end{proposition}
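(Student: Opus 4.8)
The plan is to combine CZM's latent variables into a single GCM latent vector and then check that every conditional law in Definition \ref{def:czmmodeldef} becomes a GCM component probability whose parents lie in $\{\bm{\psi}_{t-1}, y_{t-1}, \mathbf{x}_t\}$. I would take the \emph{core} GCM latent state to be $\bm{\psi}_t = (E_t, R_{r_i(t)})$, so that the unique click state is $\mathcal{C}_t = \{E_t = 1,\, R_{r_i(t)} = 1\}$; by \eqref{eq:SDBNdef9} the click is then the deterministic conjunction $y_t = E_t R_{r_i(t)}$, so the emission $f_y(y_t \mid z_t, \mathbf{x}_t)$ is parameter-free and drops out of the maximization, exactly as noted after Proposition \ref{prop:noemmision}. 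The attractiveness parameter $\vartheta = \phi^{(R)}_{r_i(t)}$ and the satisfaction parameter $\vartheta = \phi^{(S)}_{r_i(t)}$ are item-specific and hence carried by $\mathbf{x}_t$ (which identifies $r_i(t)$), whereas $\gamma$ is a single global parameter.

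First I would dispatch the two well-behaved variables. Attractiveness satisfies \eqref{eq:SDBNdef3} with $\mathcal{P}(R_{r_i(t)}) \subseteq \{\mathbf{x}_t\}$, an independent draw. Evaluation obeys \eqref{eq:SDBNdef5}, so $\mathcal{P}(E_t) \subseteq \{E_{t-1}, S_{r_i(t-1)}\} \subseteq \bm{\psi}_{t-1}$: its transition probability equals $\gamma$ on previous states with $E_{t-1} = 1, S_{r_i(t-1)} = 0$ and is degenerate ($0$ or $1$) elsewhere. These degenerate transitions are not a problem, since they fall outside the sets $\mathcal{A}^{(+)}_{t,p,i}, \mathcal{A}^{(-)}_{t,p,i}$ and carry no free parameter; the boundary condition \eqref{eq:SDBNdef6}, $E_1 = 1$, is encoded in the initial-state distribution (the $t=1$ case of Definition \ref{def:GCM}).

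The main obstacle is the satisfaction variable: by \eqref{eq:SDBNdef4}, $S_{r_i(t)}$ depends on the \emph{current} click $y_t$, i.e.\ on the current-slice variables $E_t, R_{r_i(t)}$, which is not permitted for a bare GCM component. This is precisely the configuration resolved in the proof of Proposition \ref{prop:noemmision}: a variable whose law depends on the current state is folded into the joint transition probability, because the target index $k = \text{bin}(\bm{\psi}_t)$ already records whether the slice is in the click state. I would therefore treat $S_{r_i(t)}$ as the click-state-dependent augmenting variable (playing the role of $\psi_{t,y}$ there), writing, for the augmented target configuration $(E', R', S')$ with $\text{bin} = k$ and $\vartheta_{t,E}$ determined by $k' = \text{bin}(\bm{\psi}_{t-1})$,
\begin{equation*}
\varphi_{k'k,t} = \vartheta_{t,E}^{E'}\,(1-\vartheta_{t,E})^{1-E'}\,\Bigl(\phi^{(R)}_{r_i(t)}\Bigr)^{R'}\Bigl(1-\phi^{(R)}_{r_i(t)}\Bigr)^{1-R'}\left[\Bigl(\phi^{(S)}_{r_i(t)}\Bigr)^{S'}\Bigl(1-\phi^{(S)}_{r_i(t)}\Bigr)^{1-S'}\right]^{\1_{\{k = \mathcal{C}_t\}}}.
\end{equation*}
The satisfaction factor is active only when $k = \mathcal{C}_t$, so any configuration with $S' = 1$ outside the click state receives probability $0$, reproducing the clause $\P(S_{r_i(t)} = 1 \mid y_t = 0) = 0$ of \eqref{eq:SDBNdef4}; note that $S_{r_i(t)}$, although augmenting, is still a genuine component of $\bm{\psi}_t$ and hence a legitimate parent of $E_{t+1}$, which keeps \eqref{eq:SDBNdef5} intact.

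To finish, I would observe that the displayed $\varphi_{k'k,t}$ has exactly the product-over-parameters form required by the M-step split, with the $\phi^{(S)}$ term contributing to the weighted log-loss only on click-state transitions. Feeding this mapping into the GCM/IO-HMM machinery (Propositions \ref{prop:GCMtoIOHMM} and \ref{prop:ESTEP}) then recovers CZM without any further derivation. The only delicate points are the degenerate evaluation transitions and the current-state dependence of satisfaction, both already handled by the click-state bookkeeping; everything else is routine verification against Definition \ref{def:czmmodeldef}.
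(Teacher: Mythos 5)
Your construction is correct, but it is not the route the paper takes. The paper's proof is a one-liner that \emph{lags} the satisfaction variable: it sets $\bm{\psi}_t=(S_{r_i(t-1)}, R_{r_i(t)}, E_t)$, so that the problematic dependence of satisfaction on the click is the dependence of $S_{r_i(t-1)}$ on $y_{t-1}$ --- and $y_{t-1}$ is already an explicit parent of $\bm{\psi}_t$ in Definition \ref{def:GCM}, so no augmentation device is needed for $S$ at all; this lagged encoding is also the one used in the explicit mapping (Table \ref{tab:czmspmapping} and the transition matrix \eqref{eq:czmtransmat}). You instead keep $S_{r_i(t)}$ time-aligned with its CZM index and absorb its dependence on the current click via the $\1_{\{k=\mathcal{C}_t\}}$ mechanism of Proposition \ref{prop:noemmision}. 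That works, and your observation that the augmenting variable remains a legitimate parent of $E_{t+1}$ is the right extra step --- note it \emph{is} an extra step, since in Proposition \ref{prop:noemmision} the augmenting variable $\psi_{t,y}$ has no descendants, whereas your $S_{r_i(t)}$ does. Two further remarks. First, as written your bracketed factor raised to $\1_{\{k=\mathcal{C}_t\}}$ equals $1$ off the click state, which assigns equal mass to $S'=0$ and $S'=1$ rather than forcing $S'=0$; you need to additionally stipulate that $S'$ is deterministically $0$ there (the paper's own expression for $\varphi_{k^{'}k,t}$ in Proposition \ref{prop:noemmision} requires the same gloss, so this is a shared imprecision rather than a gap). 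Second, the paper's lagged encoding buys a shorter argument and matches the implemented state space; yours buys reuse of the augmentation machinery and keeps the variables' time indices intact, at the cost of the two delicate points you yourself flag. Both yield essentially the same seven-state space once infeasible configurations are removed.
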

\begin{proof}
 For simplicity, we again drop subscript $i$. Let $\bm{\psi}_t=(S_{t-1}, A_t, E_t)$ and $z_t=\text{bin}(\bm{\psi}_t)$.  For completeness, we introduce auxiliary variables $S_0=0$ and $x_t=1$ for all $t$. Then indeed we find $\mathcal{P}(z_t)=\{z_{t-1}, y_{t-1},x_t\}$ for $t>1$,  $\mathcal{P}(z_1)=\{x_1\}$, $\mathcal{P}(y_t)=\{x_t, z_t\}$, and $\mathcal{P}(x_t)=\emptyset$ for all $t$. 
\end{proof}

Hence, the idea of modeling click models as GCM is to combine the binary latent variables $\psi_{t,1}^{(i)},\hdots, \psi_{t,P}^{(i)}$ between two subsequent clicks into one latent variable by using Equation \eqref{eq:bincompress}. I.e., the size of the state space of $z_t$ becomes $\mathcal{O}(2^P)$. If after this transformation the click model is Markovian with respect to a subset of the parents as defined in GCM, then we can conclude that the model is a GCM, and we are allowed to use the IO-HMM machinery to estimate the model's parameters using EM. 

Perhaps less obvious is that UBM can also be modeled as GCM. Looking at the definition of UBM (Definition \ref{def:UBM}), we find that UBM is Markovian in the last clicked item, hence, not Markovian in the previous position as is required by GCM, something which is called the \textit{first-order examination assumption} in click literature \cite{liu2016time}. However, we can pass information about the last clicked item from position $t$ to position $t+1$, by expanding the state space, thereby making UBM Markovian in the previous position.
\begin{proposition}
\label{prop:ubmisgcm}
UBM is a GCM
\end{proposition}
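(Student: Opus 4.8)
The plan is to mirror the construction used for CZM in Proposition~\ref{prop:czmisgcm}: fold UBM's latent variables into a single discrete state $z_t=\text{bin}(\bm{\psi}_t)$ and then verify the three parent conditions of Definition~\ref{def:GCM}. The immediate obstacle, already flagged in the surrounding text, is that UBM is not Markovian in the previous position. By Eq.~\eqref{eq:ubmdef2} the examination probability $\gamma_{t't}$ depends on the position $t'$ of the \emph{most recent} click, which may lie arbitrarily far back, rather than on $y_{t-1}$ and $\bm{\psi}_{t-1}$ only. Hence the naive choice $\bm{\psi}_t=(R_{r_i(t)},E_t)$ fails, since the law of $E_t$ cannot be written as a function of $(\bm{\psi}_{t-1},y_{t-1},\mathbf{x}_t)$ alone; this is the \emph{first-order examination assumption} that distinguishes UBM from a cascade-type model.

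To repair this I would augment the state with a bookkeeping component that carries the last-click position forward. Using the auxiliary convention $y_0=1$ from Eq.~\eqref{eq:ubmdef3}, define the deterministic quantity $d_t=\max\{t'<t:\ y_{t'}=1\}$, which is always well defined, and encode it with $m=\lceil \log_2 T\rceil$ binary digits $b_{t,1},\dots,b_{t,m}$. I then set $\bm{\psi}_t=(R_{r_i(t)},\,E_t,\,b_{t,1},\dots,b_{t,m})$ and let $\mathbf{x}_t$ carry the identity of the item $r_i(t)$, in the same spirit as the auxiliary variables $S_0$ and $x_t$ introduced in the CZM proof.

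The verification then checks the parent sets. For the transition, the tracker updates deterministically, $d_t=t-1$ if $y_{t-1}=1$ and $d_t=d_{t-1}$ otherwise, so $d_t$ is a function of $(\bm{\psi}_{t-1},y_{t-1})$; consequently $\P(E_t=1)=\gamma_{d_t,t}$ depends only on $(\bm{\psi}_{t-1},y_{t-1})$ and the time index $t$ that the GCM transition $\tilde{\varphi}_{\cdot,t}$ already carries, while $\P(R_{r_i(t)}=1)=\phi_{r_i(t)}^{(R)}$ depends only on $\mathbf{x}_t$. Hence $\mathcal{P}(\bm{\psi}_t)=\{\bm{\psi}_{t-1},y_{t-1},\mathbf{x}_t\}$ for $t>1$, and $\mathcal{P}(\bm{\psi}_1)=\{\mathbf{x}_1\}$ once we fix $d_1=0$. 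For the emission, Eq.~\eqref{eq:ubmdef5} gives $y_t=1\iff E_t=1,\ R_{r_i(t)}=1$, which is deterministic in $\bm{\psi}_t$ and free of $\Omega$, so $\mathcal{P}(y_t)=\{\mathbf{x}_t,\bm{\psi}_t\}$; and trivially $\mathcal{P}(\mathbf{x}_t)=\emptyset$. These are exactly the parent sets of Definition~\ref{def:GCM}.

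The step I expect to be delicate, and would treat most carefully, is the unique-click-state requirement of GCM. The augmented states $(1,1,b_1,\dots,b_m)$ all emit a click for \emph{every} value of the tracker, so at first glance there appear to be several click states. The resolution is that the tracker is deterministic given the observed clicks: within a single session the previous clicks pin down one value of $d_t$, so conditional on $(\bm{\psi}_{t-1},y_{t-1})$ exactly one configuration of $\bm{\psi}_t$ has positive click probability. This is consistent with the definition's allowance for the click state to depend on $t$ (and, implicitly, on the session $i$, since the construction is carried out per session). With uniqueness secured, UBM satisfies Definition~\ref{def:GCM} and can therefore be estimated through the IO-HMM EM machinery of Propositions~\ref{prop:GCMtoIOHMM}--\ref{prop:ESTEP}.
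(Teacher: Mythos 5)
Your proposal is correct and follows essentially the same route as the paper: both augment the latent state with a deterministic tracker of the last clicked position so that UBM becomes Markovian in the previous position, and then verify the parent sets of Definition~\ref{def:GCM}. The only difference is cosmetic --- you encode the tracker with $\lceil\log_2 T\rceil$ bits while the paper uses a $T$-dimensional indicator vector $\mathbf{e}_t$ (which is why it later quotes an $\mathcal{O}(2^T)$ state-space bound) --- and your explicit handling of the unique-click-state requirement matches the remark the paper defers to its explicit-mapping section.
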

\begin{proof}
From the definition of UBM, we find that it only has, at some time $t$, two latent variables: $E_t$ and $A_t$. The parents are given by $\mathcal{P}(A_t)=\emptyset$, and $\mathcal{P}(E_t)=\{y_{t^{'}}\}$, with $t^{'}=\max\{\tilde{t}\in\{0,\hdots,t-1\}|y_{\tilde{t}}=1\}$. We now define $\mathbf{e}_t=(e_{1,t}, \hdots, e_{T,t})$, where
\begin{equation}
    e_{t^{'},t}= \left\lbrace
    \begin{array}{ll}
        \1_{\{t^{'}<t\}}e_{t^{'}, t-1}(1-y_{t}) + \1_{\{t^{'}=t\}}y_{t} & \text{if } t>0  \\
        1 & \text{if } t=0
    \end{array}\right..
\end{equation}
I.e., $\mathbf{e}_{t}$ stores the last clicked item before position $t$. Since $e_{t^{'}, t}\in\{0,1\}$, we can define $\bm{\psi}_t=(\mathbf{e}_t,A_t, E_t)$ and $z_t=\text{bin}(\bm{\psi}_t)$ to again obtain a one-dimensional discrete latent state space. Let $x_t=1$ for all $t$, then indeed we find $\mathcal{P}(z_t)=\{z_{t-1}, y_{t-1},x_t\}$ for $t>1$; $\mathcal{P}(z_1)=\{x_1\}$, $\mathcal{P}(y_t)=\{x_t, z_t\}$, and $\mathcal{P}(x_t)=\emptyset$ for all $t$.
\end{proof}

Following the derivations of Propositions \ref{prop:czmisgcm} and \ref{prop:ubmisgcm}, we also notice that we could for example replace $\phi^{(A)}_{r_i(t)}$, $\phi^{(S)}_{r_i(t)}$, and $\gamma$ with activation functions $f_{\text{act}}^{(A)}(\mathbf{x}_t^{(A)}; \bm{\theta}^{(A)})$, $f_{\text{act}}^{(S)}(\mathbf{x}_t^{(S)}; \bm{\theta}^{(S)})$, and $f_{\text{act}}^{(E)}(\mathbf{x}_t^{(E)}; \bm{\theta}^{(E)})$ in case of CZM. To ensure the output of these activation functions are probabilities, we ensure the activation functions all map to the interval $(0,1)$. Here $\bm{\theta}=(\bm{\theta}^{(A)}, \bm{\theta}^{(S)}, \bm{\theta}^{(E)})$ is a vector of unknown parameters, and $\mathbf{x}_t^{(A)}, \mathbf{x}_t^{(S)}, \mathbf{x}_t^{(E)}$ is observed data, stored in $\mathbf{x}_t$, i.e., $\mathbf{x}_t=(\mathbf{x}_t^{(A)}, \mathbf{x}_t^{(S)}, \mathbf{x}_t^{(E)})$. The same argument holds for UBM. As a consequence, also generalizations of CZM and UBM can be modeled as GCM (e.g., the Bayesian Sequential State Model \cite{wang2013content}, or many of the `advanced click models' described in \citet{chuklin2015click}[Ch. 8]), as long as the Markovian properties of GCM are met. 

In terms of the cardinality of the state space, we do notice that adding the vector of the last clicked item to the state space, as we have done with UBM, at first seems to lead to a considerable memory burden. Where CZM has a cardinality smaller than $2^{3}+1$ (as some combinations of $\{R_t,E_t,S_{t-1}\}$ are infeasible), for UBM this becomes $\mathcal{O}(2^{T})$. However, it should be noted that in most cases the number of positions $T$ can be taken small. Click probabilities have a geometric decay in $T$, and many empirical studies show that these probabilities are already commonly already negligible for $T>10$ (e.g., \cite{craswell2008experimental, borisov2016neural}). Hence, such state spaces are not likely to become a burden in terms of computational complexity or storage. 

\subsection{Explicit mappings}
In the following examples, we will provide an explicit mapping from CZM and UBM to GCM. That is, how the transition matrix and activation matrices of these two models could be defined to be able to run the GCM's EM procedure (Alg. \ref{alg:gcmem}). These example mappings have also been implemented and can by found on the \verb|gecasmo|'s Github page \cite{deruijt2020a}  



\paragraph{CZM to GCM}
\label{subsubsec:explic_czm_gcm}
We will first consider how CZM can be written as GCM. As Proposition \ref{prop:czmisgcm} already showed, CZM can be modeled as a GCM by defining the latent state as $\bm{\psi}_t^{(i)}=(S_{r_i(t-1)}^{(i)}, R_{r_i(t)}^{(i)}, E_t^{(i)})$, where $S_{r_i(t-1)}^{(i)}$, $R_{r_i(t)}^{(i)}$, and  $E_t^{(i)}$ are interpreted as in Definition \ref{def:czmmodeldef}. One of the possible discretizations of the state space is given by Table \ref{tab:czmspmapping}. We have 3 binary variables, which normally would lead to a state space of size 8. However, we add an additional absorbing state $O$, and the two states having $(S_{r_i(t-1)}^{(i)}=1, E_t^{(i)}=1)$ are infeasible, therefore we end up with a total of 7 states. The  click state represents the state in which an item is clicked, and as discussed before, we assume there is only one such state for each $t$.  For convenience, we also added the corresponding click value $y_t^{(i)}$, though it is not part of the state space.

\begin{table}[htbp]
    \centering
    \begin{tabular}{l|ccc|r}
    \toprule
        State & $S_{r_i(t-1)}^{(i)}$ & $E_t^{(i)}$ & $R_{r_i(t)}^{(i)}$ & $y_t^{(i)}$ \\
        \hline
        $0$ & $0$ & $0$ & $0$ & $0$  \\  
        $1$ & $0$ & $0$ & $1$ & $0$  \\  
        $2$ & $0$ & $1$ & $0$ & $0$  \\  
        $3$ $(\mathcal{C}_t)$ & $0$ & $1$ & $1$ & $1$  \\  
        $4$ & $1$ & $0$ & $0$ & $0$  \\  
        $5$ & $1$ & $0$ & $1$ & $0$  \\
        $6$ $(O)$ & - & - & - & - \\
        \hline
        - &$\textit{1}$ & $\textit{1}$ & $\textit{0}$ & $0$  \\  
        - & $\textit{1}$ & $\textit{1}$ & $\textit{1}$ & $1$  \\  
        \bottomrule
    \end{tabular}
    \caption{CZM state space mapping}
    \label{tab:czmspmapping}
\end{table}

Using this state mapping and the probabilities introduced in Definition \ref{def:czmmodeldef}, and by writing $\bar{x}=1-x$, we find the transition matrix.
\begin{equation}
\label{eq:czmtransmat}
M_{i,t}=\begin{pmatrix}
0 & 0 & 0 & 0 & 0 & 0 & 1 \\
0 & 0 & 0 & 0 & 0 & 0 & 1 \\
\bar{\gamma_i}\bar{\phi}^{(A)}_{r_{i}(t)} & \bar{\gamma_i}\phi^{(A)}_{r_{i}(t)} & \gamma_i\bar{\phi}^{(A)}_{r_{i}(t)} & \gamma_i\phi^{(A)}_{r_{i}(t)} & 0 & 0 & 0\\
\bar{\gamma_i}\bar{\phi}^{(A)}_{r_{i}(t)}\bar{\phi}^{(S)}_{r_i(t)} & \bar{\gamma_i}\phi^{(A)}_{r_{i}(t)}\bar{\phi}^{(S)}_{r_i(t)} & \gamma_i\bar{\phi}^{(A)}_{r_{i}(t)}\bar{\phi}^{(S)}_{r_i(t)} & \gamma_i\phi^{(A)}_{r_{i}(t)}\bar{\phi}^{(S)}_{r_i(t)} & \bar{\phi}^{(A)}_{r_{i}(t)}\phi^{(S)}_{r_i(t)} & \phi^{(A)}_{r_{i}(t)}\phi^{(S)}_{r_i(t)} & 0\\
0 & 0 & 0 & 0 & 0 & 0 & 1 \\
0 & 0 & 0 & 0 & 0 & 0 & 1 \\
0 & 0 & 0 & 0 & 0 & 0 & 1 \\
\end{pmatrix}.
\end{equation}

To further define the model in GCM, we rely on the activation matrices $\{I_{\tilde{p},t}^{(+)}, I_{\tilde{p},t}^{(-)}\}_{\tilde{p}=1,\hdots,\tilde{P}}^{t=t\hdots,T}$ and the activation functions $\{f_\text{act}^{p}\}_{p=1,\hdots,P}$. The activation matrices can be directly obtained from the transition matrix, using Equations \eqref{eq:paramactfunction}, \eqref{eq:paramactmat}, and \eqref{eq:czmtransmat}. To define the activation functions, we rely on what data is available. To somewhat simplify the model, we assume $\gamma_i=\gamma$ for all $i\in\mathcal{I}$, and we represent items by a two-dimensional location vector. Although we could optimize $\gamma$ using Expression \eqref{eq:onecovcase}, we will for uniformity numerically optimize the simple neural network $f_{\text{act}}^{\gamma}(\theta_{\gamma};1)=\sigma(\theta_{\gamma})$ w.r.t. $\theta_{\gamma}$, $\sigma(\cdot)$ being the sigmoid function.

\paragraph{UBM to GCM}
As discussed in Proposition \ref{prop:ubmisgcm}, we can model UBM as a GCM by passing information about the last clicked item through positions $1,\hdots, T$. The state of the system at some session/position $(i,t)$ is defined by $\bm{\psi}_t^{(i)}=(\mathbf{e}_t^{(i)}, R_{r_i(t)}^{(i)}, E_t^{(i)})$. For simplicity, we will replace $\mathbf{e}_t$ by $t^{'}$, i.e., the last clicked item before $t$, such that we can use the following state space mapping (Table \ref{tab:ubmspmapping}). Note that, although we have multiple click states, given $t^{'}$ there is still only one click state. 
\begin{table}[htbp]
    \centering
    \begin{tabular}{l|ccc|r}
    \toprule
        State & $t^{'}$ & $E_t^{(i)}$ & $R_{r_i(t)}^{(i)}$ & $y_t^{(i)}$ \\
        \hline
        $0$ & $0$ & $0$ & $0$ & $0$  \\  
        $1$ & $0$ & $1$ & $0$ & $0$  \\  
        $2$ & $0$ & $0$ & $1$ & $0$  \\  
        $3$ $(\mathcal{C}_t)$ & $1$ & $1$ & $1$ & $1$  \\  
        $4$ & $1$ & $0$ & $0$ & $0$  \\  
        $\hdots$ & $\hdots$ & $\hdots$ & $\hdots$ & $\hdots$ \\
        $4T+1$ & $T$ & $0$ & $0$ & $0$  \\  
        $4T+2$ & $T$ & $1$ & $0$ & $0$  \\  
        $4T+3$ & $T$ & $0$ & $1$ & $0$  \\  
        $4T+4$ $(\mathcal{C}_t)$ & $T$ & $1$ & $1$ & $1$  \\ 
        $O$       &  -  &  -  &  -  &  -\\ 
        \bottomrule
    \end{tabular}
    \caption{UBM state space mapping}
    \label{tab:ubmspmapping}
\end{table}

The UBM transition matrix is slightly more involved than that of CZM. First, we consider two $4\times 4$ matrices containing the probabilities of clicking/skipping some item at position $t$ given the last click was at position $t^{'}$. The click probability matrix is given by
\begin{equation}
M_{i,t^{'}t}^{(+)}=\mathbf{1}\begin{pmatrix}
0, & 0, & 0, & \phi^{(R)}\gamma_{t^{'}t}
\end{pmatrix},
\end{equation}
whereas the skip probability matrix is given by
\begin{equation}
M_{i,t^{'}t}^{(-)}=\mathbf{1}\begin{pmatrix}
\bar{\phi}^{(R)}\bar{\gamma}_{t^{'}t},& \bar{\phi}^{(R)}\gamma_{t^{'}t},& \phi^{(R)}\bar{\gamma}_{t^{'}t},& 0
\end{pmatrix}.
\end{equation}
Here $\bm{1}$ is a vector of ones with 4 elements. By combining these matrices into one transition matrix, we obtain
\begin{equation}
M_{0,t^{'}t}=\begin{pmatrix}
M_{i,0t}^{(-)} & M_{i,t^{'}1}^{+} & \hdots & M_{i,t^{'}T}^{+} & \mathbf{0}\\
[0]              & M_{i,1t}^{(-)}   & \hdots & M_{i,t^{'}T}^{+} & \mathbf{0}\\
\vdots         & \ddots            & \ddots   & \vdots          & \mathbf{0}\\
[0]              & \hdots           & M_{i,(T^{'}-1)T}^{-} & M_{i,(T^{'}-1)T}^{+} & \mathbf{0}\\
[0]              & \hdots           & \hdots &   & \mathbf{1}\\
0               & \hdots          & \hdots  &. \hdots  & 0
\end{pmatrix},
\end{equation}
with $[0]$ a $4\times 4$ matrix consisting of only zeros, and $[1]$ a $4 \times 4$ matrix consisting of only ones. The vectors $\mathbf{0}$ and $\mathbf{1}$ are both vectors of 4 elements with only zeros and ones respectively.

Again, the activation matrices can be found using Expressions  \eqref{eq:paramactfunction}, \eqref{eq:paramactmat}, and \eqref{eq:czmtransmat}. For the activation functions, we use the same strategy as with CZM: we present the items by their two-dimensional position vector. To model probabilities $\gamma_{t^{'}t}$ a vanilla activation function is used.

\paragraph{Comparison of perplexity}
After estimating the two models using the \verb|gecasmo| package on a simulated dataset, where the simulation is based on the method proposed by \citet{de2020detecting}, we obtain Figure \ref{fig:perplexityplot}. Overall, the perplexity is 1.38 and 1.77 for CZM and UBM respectively. In this particular example, UBM seems to be biased towards correctly predicting skips, and therefore scores badly for positions close to 1. 
\begin{figure}
\centering
\begin{tikzpicture}[x=1pt,y=1pt]
\definecolor{fillColor}{RGB}{255,255,255}
\path[use as bounding box,fill=fillColor,fill opacity=0.00] (0,0) rectangle (289.08,144.54);
\begin{scope}
\path[clip] (  0.00,  0.00) rectangle (289.08,144.54);
\definecolor{drawColor}{RGB}{255,255,255}
\definecolor{fillColor}{RGB}{255,255,255}

\path[draw=drawColor,line width= 0.6pt,line join=round,line cap=round,fill=fillColor] (  0.00, -0.00) rectangle (289.08,144.54);
\end{scope}
\begin{scope}
\path[clip] ( 30.93, 25.56) rectangle (228.88,139.04);
\definecolor{fillColor}{RGB}{255,255,255}

\path[fill=fillColor] ( 30.93, 25.56) rectangle (228.88,139.04);
\definecolor{fillColor}{gray}{0.20}

\path[fill=fillColor] ( 39.93, 30.72) rectangle ( 48.11, 50.47);

\path[fill=fillColor] ( 58.10, 30.72) rectangle ( 66.28, 49.57);

\path[fill=fillColor] ( 76.28, 30.72) rectangle ( 84.46, 47.39);

\path[fill=fillColor] ( 94.46, 30.72) rectangle (102.64, 45.51);

\path[fill=fillColor] (112.64, 30.72) rectangle (120.82, 44.05);

\path[fill=fillColor] (130.81, 30.72) rectangle (138.99, 43.03);

\path[fill=fillColor] (148.99, 30.72) rectangle (157.17, 42.31);

\path[fill=fillColor] (167.17, 30.72) rectangle (175.35, 41.83);

\path[fill=fillColor] (185.34, 30.72) rectangle (193.52, 41.45);

\path[fill=fillColor] (203.52, 30.72) rectangle (211.70, 41.19);
\definecolor{fillColor}{gray}{0.80}

\path[fill=fillColor] ( 48.11, 30.72) rectangle ( 56.29, 55.98);

\path[fill=fillColor] ( 66.28, 30.72) rectangle ( 74.46,133.88);

\path[fill=fillColor] ( 84.46, 30.72) rectangle ( 92.64, 51.65);

\path[fill=fillColor] (102.64, 30.72) rectangle (110.82, 50.32);

\path[fill=fillColor] (120.82, 30.72) rectangle (129.00, 44.20);

\path[fill=fillColor] (138.99, 30.72) rectangle (147.17, 42.87);

\path[fill=fillColor] (157.17, 30.72) rectangle (165.35, 42.38);

\path[fill=fillColor] (175.35, 30.72) rectangle (183.53, 41.82);

\path[fill=fillColor] (193.52, 30.72) rectangle (201.70, 41.88);

\path[fill=fillColor] (211.70, 30.72) rectangle (219.88, 41.20);
\end{scope}
\begin{scope}
\path[clip] (  0.00,  0.00) rectangle (289.08,144.54);
\definecolor{drawColor}{RGB}{0,0,0}

\path[draw=drawColor,line width= 0.6pt,line join=round] ( 30.93, 25.56) --
	( 30.93,139.04);
\end{scope}
\begin{scope}
\path[clip] (  0.00,  0.00) rectangle (289.08,144.54);
\definecolor{drawColor}{gray}{0.30}

\node[text=drawColor,anchor=base east,inner sep=0pt, outer sep=0pt, scale=  0.60] at ( 25.98, 28.66) {0.0};

\node[text=drawColor,anchor=base east,inner sep=0pt, outer sep=0pt, scale=  0.60] at ( 25.98, 53.39) {2.5};

\node[text=drawColor,anchor=base east,inner sep=0pt, outer sep=0pt, scale=  0.60] at ( 25.98, 78.12) {5.0};

\node[text=drawColor,anchor=base east,inner sep=0pt, outer sep=0pt, scale=  0.60] at ( 25.98,102.86) {7.5};

\node[text=drawColor,anchor=base east,inner sep=0pt, outer sep=0pt, scale=  0.60] at ( 25.98,127.59) {10.0};
\end{scope}
\begin{scope}
\path[clip] (  0.00,  0.00) rectangle (289.08,144.54);
\definecolor{drawColor}{gray}{0.20}

\path[draw=drawColor,line width= 0.6pt,line join=round] ( 28.18, 30.72) --
	( 30.93, 30.72);

\path[draw=drawColor,line width= 0.6pt,line join=round] ( 28.18, 55.46) --
	( 30.93, 55.46);

\path[draw=drawColor,line width= 0.6pt,line join=round] ( 28.18, 80.19) --
	( 30.93, 80.19);

\path[draw=drawColor,line width= 0.6pt,line join=round] ( 28.18,104.92) --
	( 30.93,104.92);

\path[draw=drawColor,line width= 0.6pt,line join=round] ( 28.18,129.66) --
	( 30.93,129.66);
\end{scope}
\begin{scope}
\path[clip] (  0.00,  0.00) rectangle (289.08,144.54);
\definecolor{drawColor}{RGB}{0,0,0}

\path[draw=drawColor,line width= 0.6pt,line join=round] ( 30.93, 25.56) --
	(228.88, 25.56);
\end{scope}
\begin{scope}
\path[clip] (  0.00,  0.00) rectangle (289.08,144.54);
\definecolor{drawColor}{gray}{0.20}

\path[draw=drawColor,line width= 0.6pt,line join=round] ( 48.11, 22.81) --
	( 48.11, 25.56);

\path[draw=drawColor,line width= 0.6pt,line join=round] ( 66.28, 22.81) --
	( 66.28, 25.56);

\path[draw=drawColor,line width= 0.6pt,line join=round] ( 84.46, 22.81) --
	( 84.46, 25.56);

\path[draw=drawColor,line width= 0.6pt,line join=round] (102.64, 22.81) --
	(102.64, 25.56);

\path[draw=drawColor,line width= 0.6pt,line join=round] (120.82, 22.81) --
	(120.82, 25.56);

\path[draw=drawColor,line width= 0.6pt,line join=round] (138.99, 22.81) --
	(138.99, 25.56);

\path[draw=drawColor,line width= 0.6pt,line join=round] (157.17, 22.81) --
	(157.17, 25.56);

\path[draw=drawColor,line width= 0.6pt,line join=round] (175.35, 22.81) --
	(175.35, 25.56);

\path[draw=drawColor,line width= 0.6pt,line join=round] (193.52, 22.81) --
	(193.52, 25.56);

\path[draw=drawColor,line width= 0.6pt,line join=round] (211.70, 22.81) --
	(211.70, 25.56);
\end{scope}
\begin{scope}
\path[clip] (  0.00,  0.00) rectangle (289.08,144.54);
\definecolor{drawColor}{gray}{0.30}

\node[text=drawColor,anchor=base,inner sep=0pt, outer sep=0pt, scale=  0.60] at ( 48.11, 16.48) {1};

\node[text=drawColor,anchor=base,inner sep=0pt, outer sep=0pt, scale=  0.60] at ( 66.28, 16.48) {2};

\node[text=drawColor,anchor=base,inner sep=0pt, outer sep=0pt, scale=  0.60] at ( 84.46, 16.48) {3};

\node[text=drawColor,anchor=base,inner sep=0pt, outer sep=0pt, scale=  0.60] at (102.64, 16.48) {4};

\node[text=drawColor,anchor=base,inner sep=0pt, outer sep=0pt, scale=  0.60] at (120.82, 16.48) {5};

\node[text=drawColor,anchor=base,inner sep=0pt, outer sep=0pt, scale=  0.60] at (138.99, 16.48) {6};

\node[text=drawColor,anchor=base,inner sep=0pt, outer sep=0pt, scale=  0.60] at (157.17, 16.48) {7};

\node[text=drawColor,anchor=base,inner sep=0pt, outer sep=0pt, scale=  0.60] at (175.35, 16.48) {8};

\node[text=drawColor,anchor=base,inner sep=0pt, outer sep=0pt, scale=  0.60] at (193.52, 16.48) {9};

\node[text=drawColor,anchor=base,inner sep=0pt, outer sep=0pt, scale=  0.60] at (211.70, 16.48) {10};
\end{scope}
\begin{scope}
\path[clip] (  0.00,  0.00) rectangle (289.08,144.54);
\definecolor{drawColor}{RGB}{0,0,0}

\node[text=drawColor,anchor=base,inner sep=0pt, outer sep=0pt, scale=  0.80] at (129.90,  7.06) {Item position};
\end{scope}
\begin{scope}
\path[clip] (  0.00,  0.00) rectangle (289.08,144.54);
\definecolor{drawColor}{RGB}{0,0,0}

\node[text=drawColor,rotate= 90.00,anchor=base,inner sep=0pt, outer sep=0pt, scale=  0.80] at ( 11.01, 82.30) {Perplexity};
\end{scope}
\begin{scope}
\path[clip] (  0.00,  0.00) rectangle (289.08,144.54);
\definecolor{fillColor}{RGB}{255,255,255}

\path[fill=fillColor] (239.88, 56.82) rectangle (283.58,107.79);
\end{scope}
\begin{scope}
\path[clip] (  0.00,  0.00) rectangle (289.08,144.54);
\definecolor{drawColor}{RGB}{0,0,0}

\node[text=drawColor,anchor=base west,inner sep=0pt, outer sep=0pt, scale=  0.80] at (245.38, 96.00) {Model};
\end{scope}
\begin{scope}
\path[clip] (  0.00,  0.00) rectangle (289.08,144.54);
\definecolor{fillColor}{gray}{0.20}

\path[fill=fillColor] (246.09, 77.48) rectangle (259.12, 90.51);
\end{scope}
\begin{scope}
\path[clip] (  0.00,  0.00) rectangle (289.08,144.54);
\definecolor{fillColor}{gray}{0.80}

\path[fill=fillColor] (246.09, 63.03) rectangle (259.12, 76.06);
\end{scope}
\begin{scope}
\path[clip] (  0.00,  0.00) rectangle (289.08,144.54);
\definecolor{drawColor}{RGB}{0,0,0}

\node[text=drawColor,anchor=base west,inner sep=0pt, outer sep=0pt, scale=  0.60] at (263.83, 81.93) {CZM};
\end{scope}
\begin{scope}
\path[clip] (  0.00,  0.00) rectangle (289.08,144.54);
\definecolor{drawColor}{RGB}{0,0,0}

\node[text=drawColor,anchor=base west,inner sep=0pt, outer sep=0pt, scale=  0.60] at (263.83, 67.48) {UBM};
\end{scope}
\end{tikzpicture}
\caption{Perplexity UBM vs CZM}
\label{fig:perplexityplot}
\end{figure}
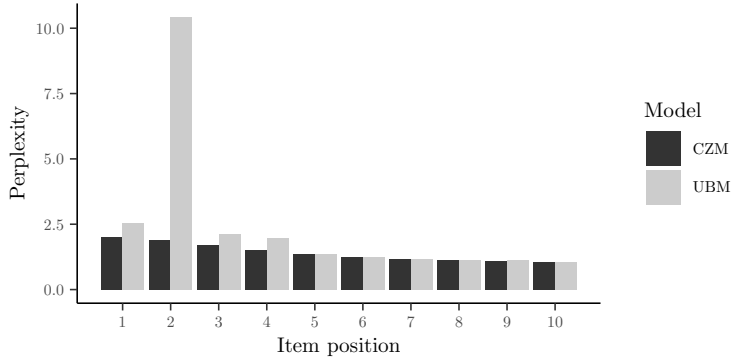

\section{Conclusion}
\label{sec:conclusion}
In this paper, we presented two contributions. First, we proposed an alternative view on estimating click models. We defined the generalized cascade model (GCM), and showed that the expectation maximization (EM) algorithm for GCMs is equivalent to the EM algorithm of an Input-Output Hidden Markov Model (IO-HMM), given that we add the observed click at position $t-1$ to the covariate vector at time $t$. As a consequence, we can directly use the EM algorithm for IO-HMMs, which provides us with an estimation procedure for GCMs without having to derive further expressions. Since many click models, including the User Browser Model (UBM) and Chapelle-Zang Model (CZM), can be written as GCM, this implies that we obtain an estimation procedure for many click models, making derivations of the E-step for specific click models (e.g., as is done in \cite{chuklin2015click}) obsolete.  

Second, we introduced the \verb|gecasmo| package, which not only includes an EM implementation for IO-HMMs, but also takes care of the mapping from GCM to IO-HMM. As a result, one only needs to define, for some specific GCM, some activation matrices and activation functions, from which the IO-HMM's transition matrix can be determined. To provide some more clarity on how to define such activation matrices and functions, we included an example for UBM and CZM. Also implementation-wise the usage of IO-HMM's EM algorithm is convenient: where previous packages for estimating click models relied on the user having to implement the EM updates manually, \verb|gecasmo| only requires the activation matrices and activation functions. Furthermore, as users can define their activation functions as neural networks, the model contains substantial flexibility in modeling click models.

Further research may move into several directions. First, in this paper we mainly focused on UBM and CZM  as examples, as these two click models are both generalization of many other click models, or are used as a starting point for more complex models. As we have seen in the UBM case, even models with a non-first-order examination assumption can be modeled as GCM by expanding the state space, and under small list sizes, do not suffer too much from the curse of dimensionality. We expect that even further generalizations to UBM or CZM can be mapped to a GCM. It would be interesting to consider to what extent these generalizations can be modeled as GCM, and if such mapping remains practical.

Second, we limited our scope to EM procedures, which does not imply that alternative estimation procedures, such as variational inference or Markov Chain Monte Carlo, could not benefit from viewing click models as GCM. On the contrary: it would be interesting to consider whether modeling click models as GCM, or more general as IO-HMM, is also convenient under these estimation techniques.

Third, as discussed in Section \ref{sec:relwork}, even though the usage of recurrent neural networks have shown to outperform many probabilistic graphical models (PGM) in terms of model perplexity, these models provide limited understanding of how users use search engines, and such understanding is obtained post hoc. As a result, user behavior may be misinterpreted. PGMs, on the other hand, require the analyst to define the model beforehand, which allows the analyst to test hypotheses on browsing behavior. What remains unknown, is to what extent using neural networks to map contextual data to a probability over a latent variable closes the gap in accuracy between the two types of models, without losing the interpretability of the model.



\bibliographystyle{ACM-Reference-Format}
\bibliography{papercontents/references}

\appendix
\section{Overview of notation}

\begin{table}[H]
    \centering
    \begin{tabular}{ll}
    \toprule
    \textbf{Variable} & \textbf{Interpretation}\\
    \hline
    $\{1,\hdots,n\}$ & The set of query sessions, indexed by $i$.\\
    $\{1,\hdots,V\}$ & Set of all items, indexed by $v$.\\
    $\mathcal{S}_i$ & Set of items in SERP corresponding to query session $i$.\\
    $\{1,\hdots,T\}$ & Set of all list positions, indexed by $t$.\\
    $\Omega$ & Set of all parameters in the GCM.\\
    $K$ & Size of the (possibly augmented) state space.\\
     $r_i(t)$ & The item in list position $v$ of query session $i$.\\
     $\mathcal{C}_t$ & The click state at time $t$\\
    $\bm{\psi}_{t}^{(i)}=(\psi_{t,1}^{(i)},\hdots,\psi_{t,P}^{(i)})$ & Vector of all latent variables at position $t$ of query session $i$, \\
    & which determines the latent state of GCM, \\
    & and can uniquely be mapped to $z_t^{(i)}$.\\
    $\mathbf{x}_{t,p}^{(i)}=(x_{t,p,1}^{(i)},\hdots,x_{t,p,m_p}^{(i)})$ & Vector of covariate values for parameter $p$.\\
    $\vartheta_{t,p}^{(i)}$ & Probability of a latent variable being one, given the previous state \\
    $f_{\text{act}}^{p}(\cdot)$ & Activation function for paramter $p$\\
    $\bm{\theta}_{p}$ & Weight vector for parameter $p$\\
    $y_{t}^{(i)}$ & 1 if the item in position $t$ of SERP $i$ is clicked, 0 otherwise.\\
    $\mathcal{P}(X)$ & The set of all parents of node $X$ in a dynamic Bayesian network.\\
    $z_t^{(i)}$ & The single-dimensional discrete state, \\
    & which uniquely corresponds to some $\bm{\psi}_{t}^{(i)}$.\\
    $\varphi_{k^{'}k,t}^{(i)}$, $M_{i,t}$ & Transition probability and transition matrix. \\
    $\zeta_{k,t}^{(i)}$ & Probability of being in state $k$ at $(i,t)$.\\
    $f_y(\cdot)$ & Emission probability of the IO-HMM.\\ 
    $h_{k^{'}k,t}$, $H_{i,t}$ & Expectation of the joint distribution of being in state $k$ at time $t$,\\
    & and state $k^{'}$ at time $t-1$, given all available data and \\
    & current estimate $\hat{\Omega}$.\\
    $c_{k^{'}k,t,p}^{+}$, $c_{k^{'}k,t,p}^{-}$, $I_{p,t}^{+}$, $I_{p,t}^{-}$ & Indicators of whether parameter $p$ positively ($\psi_p=1$) or \\ 
    &  negatively ($\psi_p=0$) influences the transition probability\\
    & between state $k^{'}$ and $k$.\\
    $\alpha_{k,t}^{(i)}$, $A_i$ & Forward probabilities in the forward-backward equations.\\
    $\beta_{k,t}^{(i)}$, $B_i$ & Backward probabilities in the forward-backward equations.\\
    $\delta_{kt}^{(i)}$, $D_i$ & Indicator of whether state $k$ is a click state\\
    $R_{v}^{(i)}$, $\phi_{v,i}^{(R)}$ & Indicator (probability) whether the user (of query session $i$)\\
    &  finds item $v$ attractive (also referred to as `relevance' in\\
    &  the click literature).\\
    $S_v^{(i)}$, $\phi_{v,i}^{(S)}$  & Indicator whether the user (of query session $i$)\\
    & is satisfied with item $v$ after a click.\\
    $E_{i,t}$ & Indicator whether the item at position $t$ is evaluated.\\
    $\gamma_i$ & Search continuation probability, has different sub/superscripts\\
    &  based on the underlying click model.\\
    \bottomrule
    \end{tabular}
    \caption{Notation overview}
    \label{tab:notationoverview}
\end{table}

\section{Notes on vector notation}
In the following we will elaborate shortly on the vectorization of Section \ref{subsubsec:implnotates}, in particular the expressions in Algorithm \ref{alg:estep}. From Bishop \cite[pp. 618-625]{bishop2006pattern} we find (using our own notation and ignoring superscript $i$)
\begin{equation}
\alpha_{k,t}=\P(y_t|z_t=k)\sum_{k^{'}=1}^{K}\varphi_{k^{'}k}(\mathbf{x}_t)\alpha_{k^{'},t-1},
\end{equation}
\begin{equation}
    \beta_{k,t}=\sum_{k^{'}=1}\P(y_{t+1}|z_{t+1}=k^{'})\varphi_{kk^{'}}(\mathbf{x}_{t+1})\beta_{k^{'},t+1},
\end{equation}
\begin{equation}
\label{eq:vecnot:hiddenexpr}
h_{k^{'}k,t}=\frac{\alpha_{k^{'},t-1}f_y(y_t|z_t=k)\varphi_{k^{'}k}(\mathbf{x}_t)\beta_{k,t}}{\sum_{l=1}^{K}\alpha_{l,T}}.
\end{equation}

Writing out $\alpha_{k,t}$ for $t>0$ gives
\begin{equation}
\begin{aligned}
\alpha_{k,t} &=\P(y_t|z_t=k)\sum_{k^{'}=1}^{K}\varphi_{k^{'}k}(\mathbf{x}_t)\alpha_{k^{'},t-1}\\
&=y_t\delta_{k,t}\sum_{k^{'}=1}^{K}\varphi_{k^{'}k}(\mathbf{x}_t)\alpha_{k^{'},t-1} + (1-y_t)(1-\delta_{k,t})\sum_{k^{'}=1}^{K}\varphi_{k^{'}k}(\mathbf{x}_t)\alpha_{k^{'},t-1},
\end{aligned}
\end{equation}
which in vector notation becomes
\begin{equation}
A_{1:K,t}=y_{t}\left[ D_{1:K,t} \odot \left(M_t^{\top} A_{1:K,t-1}\right)\right] + \left(1-y_{t}\right)\left[\left(1-D_{1:K,t}\right)\odot \left(M_{t} A_{1:K,t-1}\right)\right],
\end{equation}
where $\odot$ is the element-wise product. For $t=0$, using the original definition, we obtain:
\begin{equation}
\alpha_{k,0}=\P(y_0,z_0=t)=\delta_{0k},
\end{equation}
as we assume the item at position $t=0$ is clicked.

Similar, for $\beta_{k,t}$, $t<T$
\begin{equation}
\begin{aligned}
\beta_{k,t}&=\sum_{k^{'}=1}\P(y_{t+1}|z_{t+1}=k^{'})\varphi_{k^{'}k}(\mathbf{x}_{t+1})\beta_{k^{'},t+1}\\
&=\sum_{k^{'}=1}^{K}\varphi_{k^{'}k}(\mathbf{x}_{t+1})\beta_{k^{'},t+1}y_{t+1}f_y(1|z_{t+1=k})+\varphi_{k^{'}k}(\mathbf{x}_{t+1})\beta_{k^{'},t+1}(1-y_{t+1})f_y(0|z_{t+1=k}),
\end{aligned}
\end{equation}
which, again in vector notation, becomes
\begin{equation}
\begin{aligned}
B_{1:K,t}&=M_{t+1}^{\top}\left[y_{t+1}\left(B_{1:K,t+1}\odot D_{1:K,t+1}\right)+(1-y_{t+1})\left(B_{1:K,t+1}\odot \left(1-D_{1:K,t+1}\right)\right) \right]\\
&=M_{t+1}^{\top}\Lambda_{t+1},
\end{aligned}
\end{equation}
with
\begin{equation}
\Lambda_{t} =  y_{t}\left(B_{1:K,t}\odot D_{1:K,t}\right)+(1-y_{t})\left(B_{1:K,t}\odot \left(1-D_{1:K,t}\right)\right).
\end{equation}
We take $\beta_{k,T}=1$ for all $k=1,\hdots,K$.

Last, using \eqref{eq:vecnot:hiddenexpr}, and the previously found expressions, we obtain:
\begin{equation}
H_{t}= \left[\frac{\Lambda_{t}}{\sum_{k=1}^{K}\alpha_{k,T}} A_{1:K, t-1}^{\top}\right] \odot M_t^{\top}
\end{equation}

\section{Notes on click simulation}
Simulation parameters used in Section \ref{sec:package}. For brevity we only consider simulation parameters which were not obtained from the literature in \cite{de2020detecting}. Since the simulation is described in more depth in \cite{de2020detecting}, we will only provide a brief explanation of each variable here.
\begin{itemize}
    \item \textbf{Items}, number of items $V$.
    \item \textbf{Users}, total number of users, who may have multiple query sessions.
    \item \textbf{Warm-up sessions}, number of sessions used to estimate the overall item popularity. During the simulation, items are ordered at random, with the item order proportional to this overall item popularity.
    \item \textbf{List size}, number of items in a SERP, which we denoted by $T$.
    \item \textbf{User distance sensitivity}, a parameter regulating user attraction to nearby items.
    \item \textbf{Attraction salience}, parameter to regulate the attraction salience.
    \item \textbf{Satisfaction salience}, parameter to regulate the satisfaction salience.
    \item \textbf{Model lifetime phases geometric parameter}, the number of query sessions a single user generates is regulated by a sampled total user lifetime, and sampled times between two sessions (which in \cite{de2020detecting} is called \textit{absence time}). I.e., the number of query sessions one user generates is determined by whether the sum of absence times fits the user life time. This geometric parameters regulates the sampled user life time.
\end{itemize}
\begin{table}[htbp]
    \centering
    \begin{tabular}{ll}
    \toprule
    \textbf{Parameter} & \textbf{Value}\\
    \hline
    Items & 100\\
    Users & 20000\\
    Warm-up sessions & 100\\
    List size & 10 \\
    User distance sensitivity & 1\\
    Attraction salience & 5\\
    Satisfaction salience & 5\\
    Model lifetime phases geometric parameter  & 0.5\\
    Continuation probability & 0.9\\
        \bottomrule
    \end{tabular}
    \caption{Simulation parameters}
    \label{tab:simulationparams}
\end{table}

\end{document}